\newtheorem{Def}{Definition}
\newtheorem{prop}{Proposition}
\newtheorem{remark}{Remark}
\begin{document}

\title{\Huge{
HetMEC: Latency-optimal Task Assignment and Resource Allocation for Heterogeneous Mobile Edge Computing
}}
\author{
\IEEEauthorblockN{
{Pengfei Wang}, \IEEEmembership{Student Member, IEEE},
{Zijie Zheng}, \IEEEmembership{Student Member, IEEE},\\
{Boya Di}, \IEEEmembership{Student Member, IEEE},
{and Lingyang Song}, \IEEEmembership{Fellow, IEEE}}\\


\thanks{The authors are with School of Electronics Engineering and Computer Science, Peking University, Beijing, China (email: wangpengfei13@pku.edu.cn; zijie.zheng@pku.edu.cn; diboya@pku.edu.cn; lingyang.song@pku.edu.cn).}
}

\maketitle

\vspace{-0.5cm}

\begin{abstract}
Driven by great demands on low-latency services of the edge devices (EDs), mobile edge computing~(MEC) has been proposed to enable the computing capacities at the edge of the radio access network. However, conventional MEC servers suffer disadvantages such as limited computing capacity, preventing the computation-intensive tasks to be processed in time. To relief this issue, we propose the heterogeneous MEC (HetMEC) where the data that cannot be timely processed at the edge are allowed be offloaded to the upper-layer MEC servers, and finally to the cloud center (CC) with more powerful computing capacity. We design the latency minimization algorithm by jointly coordinating the task assignment, computing and transmission resources among the EDs, multi-layer MEC servers, and the CC. Simulation results indicate that our proposed algorithm can achieve a lower latency and higher processing rate than the conventional MEC scheme.

\end{abstract}

\begin{keywords}

Heterogeneous mobile edge computing, multi-layer MEC, task assignment, resource allocation

\end{keywords}

\newpage

\section{Introduction}%
With the rise of the Internet of Things (IoT), namely a network including interconnected devices capable of exchanging information~\cite{LAG-IOT-2010,DSFI-2012}, huge amount of data are generated and transmitted throughout the communication networks~\cite{CISIO}.
However, the computing capacities of the current communication networks are not sufficient to satisfy users' increasing demands on high data rates~\cite{ABE-2015}.
Traditionally, cloud computing has been proposed as an effective solution for such data explosion by making use of the strong computing capacity of the data center~\cite{CLOUD}. As a centralized paradigm, cloud computing can provide a wide range of services and massive computing resources supported by a large group of computers in the data center. However, the data transmission from the edge of the network to the remote cloud center usually induces high latency, which is unacceptable for the latency-sensitive applications~\cite{ETSI,Vehicle}.
\par
To deal with the dilemma of cloud computing, mobile edge computing (MEC) has been investigated, which enables the computation to be performed at the mobile devices and the access points (APs)\footnote{Base stations are the typical APs in the radio access networks.} within the radio access networks~\cite{Kaibin-MECsurvey,MECsurvey-2011}.
The MEC servers that possess the computing resources, e.g., the APs, offer rich services in close proximity to the end users, also known as the edge devices (EDs)~\cite{CLOUDLET}.
When these EDs generate computation tasks at the edge of the communication networks, they can offload tasks to the MEC servers nearby rather than the remote cloud center~\cite{MT-2016}.
Therefore, the MEC provides the low-latency and high-efficient data processing due to the proximity of the computing resources~\cite{WJQYL-2016,SXYLJW-2017}.
However, most works only consider the MEC servers that directly communicate with the EDs via wireless links to offer the in-proximity services\cite{LJHY-2015,XuChen-2016,ShuguangCui-2018}.
Due to the limited computing capacities of these MEC servers, it would be desirable that the data that cannot be processed at the MEC servers can be further offloaded to the upper-layer MEC servers, until to the cloud center (CC).

In this paper, we consider \emph{heterogeneous MEC} (HetMEC) for uplink communications, where EDs divide and offload the computational intensive tasks to multi-layer MEC servers and the CC for latency performance improvement~\cite{ETSI}.
Classified by the locations in the wired-wireless networks, various function nodes with certain computing capacities serve as the \emph{MEC servers} on different layers, that is, the APs, switches, network gateways, and small data centers from the bottom up~\cite{3GPP-CoreNetwork}.
The data flow of each ED first transits in the radio access networks via a wireless ED-AP link. The received data of each AP are then partially processed and delivered to the wired core network,  passing through the switches, network gateways, and the small data centers sequentially~\cite{ComputerNetwork}. Locating at bottom-up layers, these MEC servers provide increasing computing capacity for data processing and finally send the data from the bottom layers to the remote CC\footnote{It is worth noting that the switches, network gateways or small data centers are not all necessary in the wired networks. The APs are possible to connect with the network gateways directly, and the network gateways may also connect with the remote cloud center.}.
Based on such a HetMEC structure, the computing resources of multi-layer MEC servers and the CC can be fully exploited to support computation-intensive and latency-sensitive tasks of the EDs with strong robustness.

\par A number of challenges induced by the heterogeneous nature of the multi-layer MEC networks still remain to be solved. 
\emph{First}, since the data of each task can be divided and partially processed by multiple MEC servers on different layers, the task assignments among these servers are coupled with each other by the limited resources of their own. In other words, the amount of offloaded data in one MEC layer is correlated with that in all the other layers, which is different from that in the traditional MEC networks\footnote{In the traditional MEC networks, EDs can offload data to only one layer of MEC servers, i.e., the APs.}.
\emph{Second}, the transmission resource allocation in both the wireless and wired network need to be considered, which are closely related with the task assignment among multiple layers of MEC servers. To be specific, the allocated wired transmission resources directly restrict the data transmission rate between adjacent layers of MEC servers.
\emph{Third}, due to the limited computing capacity of each MEC server, the robustness of the HetMEC network should be considered and evaluated in response to the time-varying data generation speed at the EDs.

\par
In the literature, the above challenges induced by the HetMEC architecture have not been fully addressed~\cite{Hierarchical-2016,AN-2017,BZGTHQ-2015}. Most existing works either do not consider the task assignment, computing and transmission resource allocation jointly~\cite{CMRG-2015,Kaibin-2017,Tang-2017,Kaibin-2018,XHCP-2017}, or fail to depict the relations between multiple layers in the HetMEC network~\cite{TD-2018,CFCQL-2017}.
In~\cite{CMRG-2015}, an efficient $k$-out-of-$n$ task assignment scheme is proposed to minimize the execution time on multiple processor nodes and save energy consumption.
In~\cite{Kaibin-2017}, the transmission resource allocation is studied for multi-user mobile edge computational offloading constrained by the computation latency. 
Authors in~\cite{Kaibin-2018} analyze the transmission latency and computation latency separately, taking the task assignment and computing rate control into account.
In~\cite{TD-2018}, the traditional MEC networks are discussed, where the MEC computing resource allocation and uplink power allocation are studied along with the binary\footnote{The binary task offloading means that the task is impartible and is processed at either the edge device or the MEC server.} task offloading.
In~\cite{CFCQL-2017}, the computation offloading and interference management are performed in the wireless cellular networks with a single MEC server.
Unfortunately, joint task assignment among multi-layer MEC servers in the HetMEC network has not been taken into account together with the computing and transmission resource allocation.


The main contributions of our paper are summarized as follows.
\begin{itemize}
\item 
    We study the HetMEC network consisting of the EDs, multiple layer of MEC servers and the CC. The uplink transmission is considered where the data generated at the EDs are processed at multiple layers of MEC servers and finally aggregated at the CC.
\item In order to improve the latency performance of the HetMEC network, we jointly considering the task assignment, computing and transmission resource allocation among multiple layers. A latency minimization algorithm (LMA) is then developed to achieve the global-optimal system latency.
\item Simulations are performed in the considered HetMEC networks with different numbers of layers, and the results show that our algorithm LMA achieves a lower latency and higher processing rate than the previous schemes. The influence of the number of MEC layers on the robustness performance has been discussed.
\end{itemize}

\par
The rest of the paper is organized as follows.
In Section \ref{sec:model} we describe the system model of the HetMEC network.
In Section \ref{sec:formulation} we discuss the system constraints of the HetMEC network and formulate the system latency minimization problem.
To solve this problem, we design the algorithm LMA and analyze the influence of the number of MEC layers on the network robustness in Section \ref{sec:algorithm_design}.
Simulation results are given in Section \ref{sec:simulation_results}. Finally the conclusions are drawn in Section \ref{sec:conclusion}.

\section{System Model \label{sec:model}}%
\begin{figure}[t]
\centering
\includegraphics[width=0.8\textwidth]{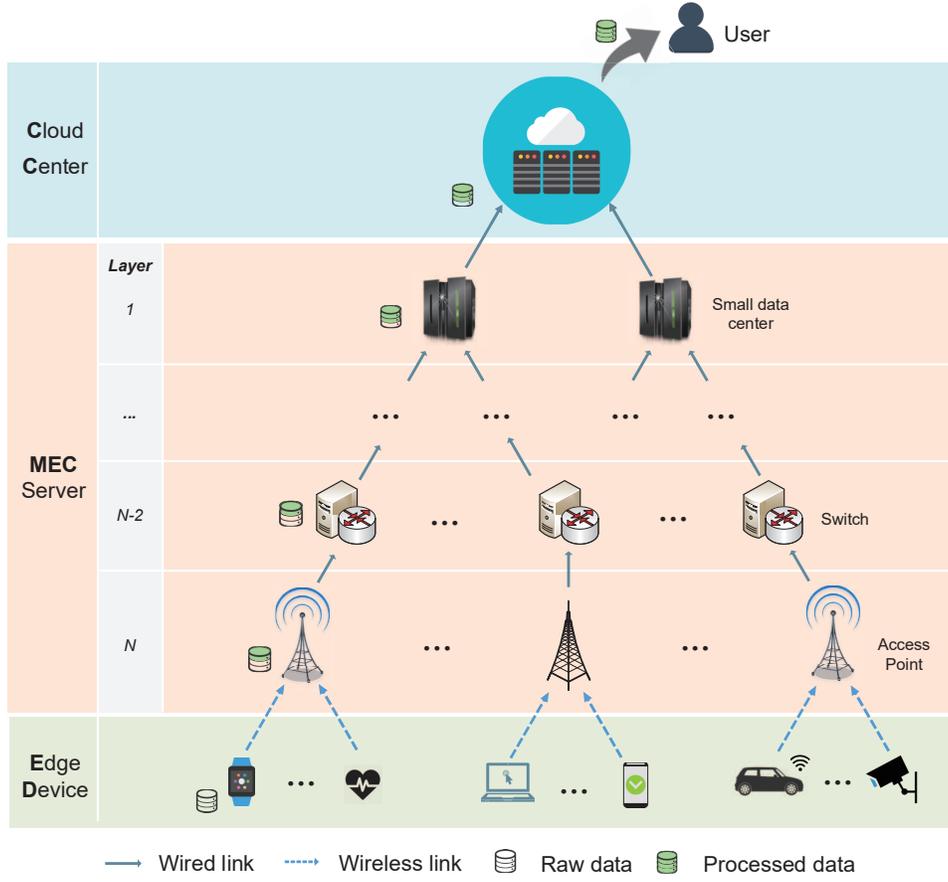}
\caption{The architecture of the HetMEC network.}
\label{Fig:model}
\end{figure}

As shown in Fig.~\ref{Fig:model}, we consider a HetMEC network consisting of the EDs, $N$ layers of MEC servers, and a CC.
Each ED accesses the network by the AP through the wireless links between them. The uploaded data received by the EDs can then be forwarded to the MEC servers on the upper layers, connecting to the CC via wired links.
The number of devices on each layer $n$ is denoted by $M_n$, $1\leq\! n \!\leq N$.
In such a tree structure, each node (ED or MEC server) connects with at most one parent node in the upper layer. The number of the child nodes\footnote{That is to say, $\sum_{i=1}^{M_{n-1}}Q_i^{n-1}=M_n$.} connected with the parent node $i$ on layer $n$ is denoted by $Q_n^i$, and the set of the child nodes is denoted by $\mathcal{Q}_n^i$.
Each node has a different computing capacity. To communicate with its child nodes, each MEC server and the CC possesses a certain amount of wireless or wired transmission resources\footnote{Among the MEC servers, the APs possess the wireless transmission resources, while the others possess the wired transmission resources.}.
We assume that all nodes access the wireless or wired channel via time division multiple address~(TDMA) technology, implying that the frequency bands occupied by any two APs are orthogonal.

\par For a typical uplink MEC application, where the raw data are generated at the EDs and the results of the data processing need to be aggregated at the CC.
The task generated at the ED is divided into multiple parts, and the ED, each MEC server on different layers or the CC only processes a part of it\footnote{The EDs, MEC servers and the CC all can process the raw data.}. After processing its own part, the processing results and the remaining raw data are delivered to the upper layer. The percentage of the data to be processed at each node is adjustable.
Moreover, once the data are processed at the edge of the network, i.e., at the ED or MEC servers, the output results, which are then forwarded to the CC, usually have a much smaller size than the raw data.

The mathematical models of data processing and transmitting at the ED, MEC server and CC are listed below.

\subsection{Edge Device}
The EDs, including the cars, smartwatches, cameras, etc., are on the bottom of the HetMEC networks, and usually responsible for generating the raw data.
For convenience, we refer to the EDs as layer $N+1$.
Each ED processes part of the raw data, and delivers the results of the raw data together with the rest raw data to the node (AP) on the $N$-th layer via wireless link. Let $s_{N+1}^{i}$ represent the task division percentage of ED $i$, which satisfies that
\begin{align}
0\leq s_{N+1}^{i}\leq 1.  \label{NBlockCon_ED1}
\end{align}
Let $\lambda_{N}^{i}$ denotes the data generation speed of ED $i$, and $\rho$ denotes the compression ratio after the data processing.
The computing capacity and the wireless transmitting capacity of ED $i$ connected with AP $j$ on the upper layer per unit time is denoted by $\theta_{N}^{i}$ and $\phi_{N}^{j,i}$, respectively.
The computing data volume is limited by its computing capacity.
\begin{align}
\lambda_{N+1}^{i} s_{N+1}^{i} \leq \theta_{N+1}^{i}, \label{NBlockCon_ED2}
\end{align}
and the maximum computing capacity that ED $i$ can offer is denoted by $\theta_{N+1}^{i,u}$, and thus,
\begin{align}
\theta_{N+1}^{i} \leq \theta_{N+1}^{i,u}.
\end{align}
The transmitting data volume is restricted by the wireless transmitting capacity of ED $i$, which is closely related with the wireless transmission resources allocated by AP $j$.
\begin{align}
\rho\lambda_{N+1}^{i} s_{N+1}^{i}+\lambda_{N+1}^{i}(1 - s_{N+1}^{i}) \leq \phi_{N+1}^{j,i}, \label{NBlockCon_ED3}
\end{align}
where $\rho\lambda_{N+1}^{i} s_{N+1}^{i}$ is the processing results, and $\lambda_{N+1}^{i}(1 - s_{N+1}^{i})$ represents the remaining raw data to transmit to AP $j$.
The total transmitting data volume of all EDs connected with AP $j$ is \emph{linearly} constrained by the wireless transmission resources of node $j$ on the $N$-th layer, denoted by $\phi_{N}^{j}$, which can be expressed by
\begin{align}
\sum_{i=1}^{M_j^{N}}\phi_{N+1}^{j,i} \leq \phi_{N}^{j}. \label{NBlockCon_ED4}
\end{align}
\begin{remark}
The constraints in~(\ref{NBlockCon_ED4}) can describe such wireless resources which influence the wireless data rate in a linear manner, e.g., the spectrum and time resources. The power and the antenna resources cannot be modeled in the similar way, which are left in the future works.
\end{remark}

\subsection{Mobile Edge Computing (MEC) Server}
The MEC servers share the computing pressure of the EDs.
Being the bottom layer of the MEC servers, the APs connect with EDs via wireless links and enable the EDs to access the wired networks.
Other MEC servers, e.g., the switches and network gateways, receive the raw data and the processing results from APs. After processing part of the receiving raw data, the MEC server forwards the rest raw data together with the processing results to its parent node (upper-layer MEC server or CC).

When multiple MEC servers connect to the same upper-layer MEC server or belong to the same switch or bridge, the upper-layer MEC server can coordinate the connected MEC servers and allocate the transmission resources in a centralized way. Then transmission resources, e.g., the bandwidth or time, can be divided linearly to the multiple nodes \cite{ref:wiredmodel}.

As shown in Fig. \ref{Fig:model_dataflow}, we consider the node $j$ on the layer $n, 1\leq n\leq N$, which is connected with the node $k$ (MEC server or CC) on the layer $n-1$.

\begin{figure}[t]
\centering
\includegraphics[width=0.9\textwidth]{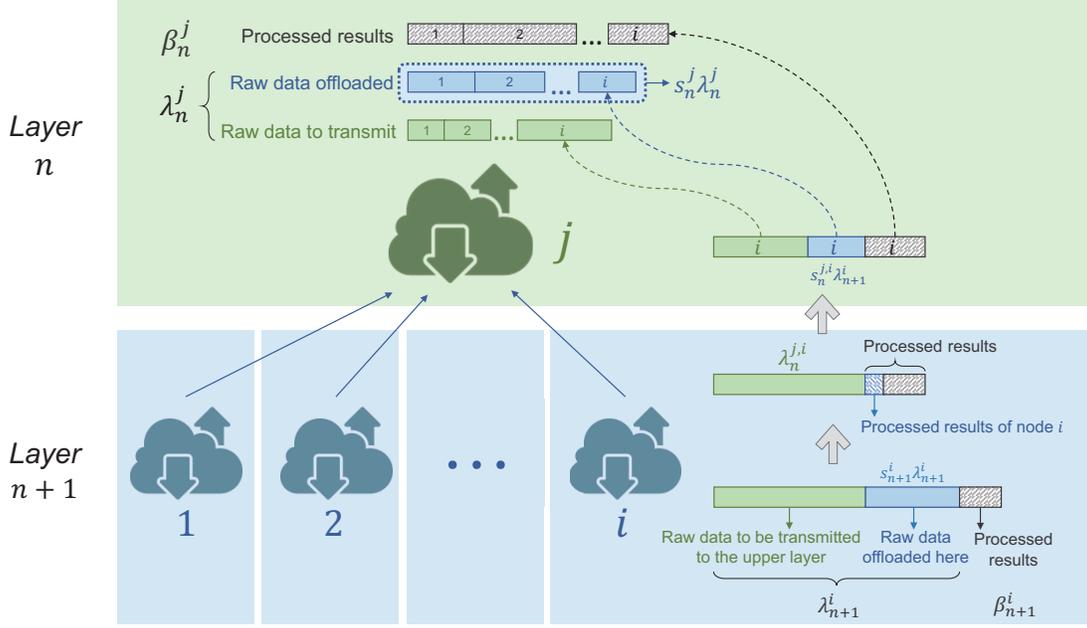}
\caption{The data processing and transmitting between two adjacent MEC layers of the HetMEC network.}
\label{Fig:model_dataflow}
\end{figure}

The raw data arrival speed from its child node $i$ on the $(n+1)$-th layer can be expressed by
\begin{equation}\label{Lambda_AP}
  \lambda_{n}^{j,i} = \phi_{n+1}^{j,i}\cdot\frac{(1-s_{n+1}^{i})\lambda_{n+1}^{i}}{(1-s_{n+1}^{i}+\rho s_{n+1}^{i})\lambda_{n+1}^{i} + \beta_{n+1}^{i}},
\end{equation}
where $\phi_{n+1}^{j,i}$ is the total data arrival speed to node $j$ from node $i$ on the lower layer, and only part of it is the raw data arrival speed. The raw data volume transmitted to node $j$ from its child node $i$ is $\lambda_{n+1}^{i}(1-s_{n+1}^{i})$, and the processed data volume is $\lambda_{n+1}^{i} \rho s_{n+1}^{i} + \beta_{n+1}^{i}$, where $s_{n+1}^{i}$ is the equivalent task division percentage\footnote{When the node $i$ is not at the bottom layer, i.e., it is not an ED, it may receives raw data from multiple links. $s_{n+1}^{i}$ is the percentage of the total raw data volume to be processed at node $i$ in its total received raw data volume.} of node $i$ and $\beta_{n+1}^{i}$ represents the volume of the processed data under the $(n+1)$-th layer.
Hence, the equivalent raw data arrival speed at node $j$ on the $n$-th layer is denoted by
\begin{equation}\label{Lambda_AP_equi}
  \lambda_{n}^{j} = \sum_{i\in \mathcal{Q}_{n}^{j}}\lambda_{n}^{j,i}.
\end{equation}
Accordingly, the total volume of the processed data received by node $j$ on the $n$-th layer can be expressed by
\begin{equation}\label{Beta_AP}
  \beta_{n}^{j} = \sum_{i=1}^{N_j}\left( \beta_{n+1}^{i} + \rho \cdot s_{n+1}^{i} \lambda_{n+1}^{i} \right).
\end{equation}

Let $s_{n}^{j,i}$ denotes the task division percentage of node $j$ for the data delivered from node $i$ on the $(n+1)$-th layer, which satisfies that
\begin{align}
0\leq s_{n}^{j,i}\leq 1, i\in\mathcal{Q}_{n}^{j}.  \label{NBlockCon_AP1}
\end{align}
The computing capacity of node $j$ is denoted by $\theta_{n}^{j}$.
The total data volume to be processed at node $j$, denoted by $C_n^j$, is limited by its computing capacity, which can be expressed by
\begin{align}
C_n^j = \sum_{i\in \mathcal{Q}_{n}^{j}} s_{n}^{j,i} \lambda_{n}^{j,i} \ \leq \ \theta_{n}^{j}.\label{NBlockCon_AP2}
\end{align}
Constrained by the limited computing resources, the maximum computing capacity that node $j$ on the $n$ layer can offer is denoted by $\theta_{n}^{j,u}$, and thus,
\begin{align}
\theta_{n}^{j} \leq \theta_{n}^{j,u}.
\end{align}

Let $\phi_{n}^{k,j}$ denote the wired transmitting capacity of node $j$ to node $k$. The transmitting data volume of AP $j$ is limited by its wired transmitting capacity, which is closely related with the wired transmission resources allocated by its parent node $k$ on the upper layer.
\begin{align}
\rho\lambda_{n}^{j} s_{n}^{j}+\lambda_{n}^{j}(1 - s_{n}^{j}) + \beta_{n}^{j}\leq \phi_{n}^{k,j}.\label{NBlockCon_AP3}
\end{align}
The data to be transmitted to node $k$ on the upper layer includes three parts. $\rho\lambda_{n}^{j} s_{n}^{j}$ is the processed data of node $j$, and $\lambda_{n}^{j}(1 - s_{n}^{j})$ is the remaining raw data to transmit, and $\beta_{n}^{j}$ is the processed data delivered from the lower layer. All the three parts need to be transmitted to the upper layer, which is limited by the allocated wired transmitting capacity $\phi_{n}^{k,j}$ of node $j$.
Moreover, the total transmitting data volume of all nodes on the $n$ layer connected with the node $k$ is limited by the wired transmission resources of the node $k$, denoted by $\phi_{n-1}^{k}$, which can be expressed by
\begin{align}
\sum_{j\in\mathcal{Q}_{n-1}^k}\phi_{n}^{k,j}\leq \phi_{n-1}^{k}. \label{NBlockCon_AP4}
\end{align}

\subsection{Cloud Center}
The CC collects the data from the MEC servers via wired links. All raw data delivered to the CC is processed and the whole results are forwarded to the user who generates the task. 
For convenience, we refer to the CC as layer $0$.
\par The equivalent raw data arrival speed at the CC can be calculated by
\begin{equation}\label{Lambda_CC}
  \lambda_{0}^{1} = \sum_{i=1}^{M_0^{1}}\left[\phi_{1}^{1,i}\cdot\frac{(1-s_{1}^{i})\lambda_{1}^{i}}{(1-s_{1}^{i}+\rho s_{1}^{i})\lambda_{1}^{i} + \beta_{1}^{i}}\right],
\end{equation}
The arriving data at the CC includes three part: the remaining raw data, the processing results of the MEC servers and the processing results of the EDs. The raw data arrival speed is proportional to the remaining raw data volume percentage in the arriving data. Moreover, the computing capacity of the CC is denoted by~$\theta_{0}^{1}$, and the maximum computing capacity the CC can offer is denoted by $\theta_{0}^{1,u}$.

\par
During the processing and transmitting from the EDs to the CC, the task assignment strategy $\bm{s}$, the computing capacity of each MEC server $j$ on the $n$-th layer $\theta_{n}^{j}$, the computing capacity of the CC $\theta_{0}^{1}$, the wireless transmission resources allocation $\phi_{N+1}^{j,i}$ and the wired transmission resources allocation of the $n$-th layer, $\phi_{n}^{k,j}$, need to be optimized, which will be discussed in Section \ref{sec:formulation}.

\section{Problem Formulation} \label{sec:formulation}%

In this section, we first analyze the system constraints of the considered HetMEC network, and then formulate the system latency minimization problem given these constraints.


\subsection{System Constraints}
\begin{figure}[t]
\centering
\includegraphics[width=0.8\textwidth]{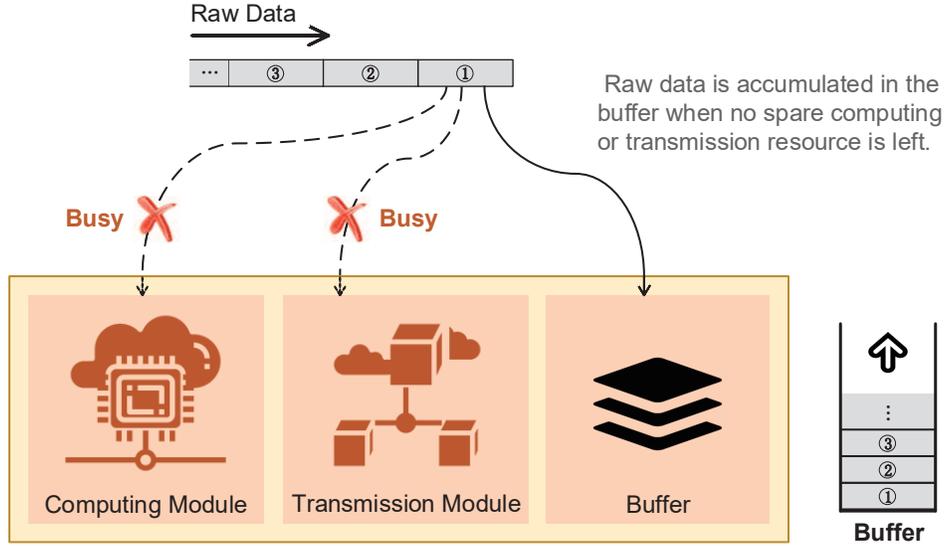}
\caption{An illustration of the congestion.}\label{fig:congestion_state}
\end{figure}
We first describe when and why a HetMEC network can be out of function due to the traffic congestion.
The total computing capacity of each node, the total wireless transmission resources of each AP, and the total wired transmission resources of each MEC server or the CC in our framework are finite, however, the data generation speed is fluctuant and time-varying.
As shown in Fig. \ref{fig:congestion_state}, when the data generation speed of the EDs exceeds a certain bound, the HetMEC network cannot follow up the data generation speed due to lack of available computing or transmission resources, and thus, the data will accumulate in the buffer\footnote{The space of the buffer in each node is viewed as infinite, that is, the data only accumulates in the buffer and no data loss happens when facing the congestion.}.
As the raw data keep accumulating, the waiting time in the buffer increase, eventually leading to a congested network.
\par
We then derive the system constraints of the HetMEC network under which the above congestion does not happen.
Specifically, each layer of the HetMEC network does not appear the data accumulation.
In the $N$-layer HetMEC network, i.e., the HetMEC network with $N$ layers of MEC servers, the execution of tasks is related to the computing of the EDs, $N$ layers of MEC servers and the CC, as well as the transmitting of the EDs and $N$ layers of MEC servers.

\subsubsection{Constraints of the $n$-th Layer}
We consider the $n$-th layer\footnote{This layer may consist of the EDs or MEC servers. The constraints of them are similar.}, $1\leq n \leq N+1$. After receiving the raw data and the results, the nodes on the $n$-th layer need to process a part of the raw data, and transmit the rest raw data together with the processing results to the upper layer.
It is worth noting that the transmission resource allocated to each node on the $n$-th layer is determined by its parent node on the $(n-1)$-th layer.
We consider the case that all nodes on the $n$-th layer fully use their computing capacity, expressed by
\begin{align}
\lambda_{n}^{i} s_{n}^{i} = \theta_{n}^{i} = \theta_{n}^{i,u}, \ \forall 1\leq i \leq M_n, 1\leq n \leq N+1, \label{NBlockCon_n}
\end{align}
which implies that the transmission pressure of the $n$ layer is minimum.
Under the aforementioned circumstance, for each parent node $j$ on the $(n-1)$-th layer, the total volume of the data to transmit from the $n$-th layer to the parent node $j$ cannot surpasses the total transmission capacity of the parent node $j$.
Hence, when $n=N+1$ (the EDs), the constraints for transmitting are described in (\ref{NBlockCon_ED3}) and (\ref{NBlockCon_ED4}).
When $1\leq n\leq N$ (the MEC servers), the constraints for transmitting are described in (\ref{NBlockCon_AP3}) and (\ref{NBlockCon_AP4}).

\subsubsection{Constraints of the CC}
The CC needs to process all the remaining raw data delivered from the lower layer, and does not need to transmit. The volume of the arrived raw data at the CC should not surpass its computing capacity.
Hence, the constraint of the CC is expressed by
\begin{align}
\lambda_{0}^{1} \leq \theta_{0}^{1}.  \label{NBlockCon_CC}
\end{align}

Summarizing the constraints for all the layers of the HetMEC network, we have Proposition~\ref{prop:non-congested} to clarity the system constraints.
\begin{prop}\label{prop:non-congested}
The system constraints of the HetMEC network can be described by the constraints of each layer in the HetMEC network, i.e., the constraints that (\ref{NBlockCon_ED1}), (\ref{NBlockCon_ED3}), (\ref{NBlockCon_ED4}), (\ref{NBlockCon_AP1}), (\ref{NBlockCon_AP3}), (\ref{NBlockCon_AP4}), (\ref{NBlockCon_n}) and (\ref{NBlockCon_CC}) are all satisfied.
\end{prop}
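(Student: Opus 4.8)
The plan is to establish Proposition~\ref{prop:non-congested} by showing the two directions of an equivalence: (i) if every per-layer constraint listed in the statement holds, then no buffer in the network accumulates data indefinitely, i.e., the network is non-congested; and (ii) conversely, if the network is non-congested, then a feasible choice of task-division percentages and resource allocations satisfying all the listed constraints exists. The key observation that makes both directions work is that the HetMEC network has a tree (layered) structure, so the flow conservation of data can be analyzed layer by layer, from the EDs (layer $N+1$) up to the CC (layer $0$), without circular dependencies.

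For direction (i), I would argue by induction on the layer index, moving upward from $n=N+1$ to $n=0$. At layer $N+1$, constraint~(\ref{NBlockCon_ED1}) makes the task split well-defined, constraint~(\ref{NBlockCon_ED2}) (subsumed in~(\ref{NBlockCon_n}) with $n=N+1$) guarantees each ED's computation keeps pace with its data-generation speed, and constraints~(\ref{NBlockCon_ED3})--(\ref{NBlockCon_ED4}) guarantee the outgoing traffic (processed results of size $\rho\lambda_{N+1}^{i}s_{N+1}^{i}$ plus remaining raw data $\lambda_{N+1}^{i}(1-s_{N+1}^{i})$) fits within the wireless transmission capacity the parent AP allocates, and that the AP's total allocation does not exceed its budget $\phi_{N}^{j}$. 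Hence no buffer at layer $N+1$ grows. For the inductive step at a general layer $n$ ($1\le n\le N$), I would take as the induction hypothesis that the arrival speeds $\lambda_{n}^{j,i}$, $\lambda_{n}^{j}$, and $\beta_{n}^{j}$ into layer $n$ are exactly the steady-state quantities computed from equations~(\ref{Lambda_AP})--(\ref{Beta_AP}) (so the lower layers are already non-congested and transmitting at these rates). Then~(\ref{NBlockCon_AP1}) makes the split well-defined, (\ref{NBlockCon_n}) (for index $n$) ensures the computation at node $j$ keeps up with its raw-data arrival, and~(\ref{NBlockCon_AP3})--(\ref{NBlockCon_AP4}) ensure the three outgoing components — processed data $\rho\lambda_{n}^{j}s_{n}^{j}$, remaining raw data $\lambda_{n}^{j}(1-s_{n}^{j})$, and relayed processed data $\beta_{n}^{j}$ — fit within the wired transmission capacity allocated by the parent, with the parent's total allocation respecting its budget $\phi_{n-1}^{k}$. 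This closes the induction at layer $n$. Finally, at the CC (layer $0$), constraint~(\ref{NBlockCon_CC}) states precisely that the raw-data arrival speed $\lambda_{0}^{1}$ from~(\ref{Lambda_CC}) does not exceed the CC's computing capacity, and since the CC does not forward data, no buffer accumulates there either. Thus every node's buffer stays bounded, i.e., the network is non-congested.

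For direction (ii), I would argue the contrapositive-in-spirit: suppose the network admits some feasible operating point with no buffer growing unboundedly. Then for each node the long-run inflow rate must equal the long-run outflow-plus-processing rate; taking time averages of these balance equations reproduces exactly the relations~(\ref{Lambda_AP})--(\ref{Lambda_CC}) for the averaged arrival speeds, and the boundedness of each buffer forces the computation and transmission rates at each node to dominate the corresponding averaged loads — which is what constraints~(\ref{NBlockCon_ED3}), (\ref{NBlockCon_ED4}), (\ref{NBlockCon_AP3}), (\ref{NBlockCon_AP4}), (\ref{NBlockCon_CC}) assert, while~(\ref{NBlockCon_ED1}) and~(\ref{NBlockCon_AP1}) are just the definitional range of a percentage. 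The only subtlety is~(\ref{NBlockCon_n}), the ``full computing capacity'' equality; here I would note, as the text already remarks, that operating every node at its maximum computing rate \emph{minimizes} the downstream transmission load, so if \emph{any} feasible point exists then the one with~(\ref{NBlockCon_n}) imposed is also feasible — making~(\ref{NBlockCon_n}) a without-loss-of-generality normalization rather than an extra restriction. Assembling the per-layer conditions over all $n$ from $N+1$ down to $0$ then yields the full list in the statement.

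The main obstacle I anticipate is the reverse direction, specifically justifying that imposing the equality~(\ref{NBlockCon_n}) loses no generality: one must verify that forcing each node to compute at full capacity only decreases each of the outgoing data components in~(\ref{NBlockCon_AP3}) (since increasing $s_{n}^{j}$ replaces raw data of ``weight'' $1$ by processed data of ``weight'' $\rho<1$) and, crucially, that this monotonicity propagates correctly through the coupled recursions~(\ref{Lambda_AP})--(\ref{Beta_AP}) up every layer of the tree without some downstream increase in $\beta$ offsetting the decrease in $\lambda$. This requires a careful sign-tracking argument through~(\ref{Lambda_AP}) and~(\ref{Beta_AP}); everything else is bookkeeping across the layered structure.
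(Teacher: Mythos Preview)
Your proposal is correct and considerably more careful than what the paper itself does. In the paper, Proposition~\ref{prop:non-congested} is not given a separate proof: it is presented as a summarizing statement of the preceding layer-by-layer derivation in Section~III.A, where the authors walk through the constraints of the $n$-th layer and the CC and then simply collect them. Your forward direction (i) by upward induction on the layer index is essentially a formalization of that same narrative argument, so on that half you are aligned with the paper.

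Where you go beyond the paper is in the reverse direction (ii) and, in particular, in your discussion of the equality constraint~(\ref{NBlockCon_n}). The paper does not treat the proposition as a biconditional to be proved; it asserts~(\ref{NBlockCon_n}) as a normalization (``the transmission pressure of the $n$ layer is minimum'') without the monotonicity-through-the-recursion argument you flag. Your observation that increasing $s_n^j$ replaces weight-$1$ raw data by weight-$\rho$ processed data, and that one must check this reduction propagates consistently through~(\ref{Lambda_AP})--(\ref{Beta_AP}), is exactly the right concern; the paper glosses over it. So your approach buys a genuine proof of equivalence, whereas the paper's buys brevity by treating the proposition as definitional.
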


\subsection{Latency Minimization Problem Formulation}
We aim to minimize the system latency of the HetMEC network, which is a general objective in the MEC networks. We then define the system latency in the HetMEC network and formulate the latency minimization problem in this subsection.


The latency of a task is defined as the sum of the computing time and transmitting time from the ED to the CC.
Consider the $n$-th layer where nodes receive the tasks from the lower layer.
The nodes need to process the assigned raw data from the lower layer, and deliver the processing results as well as unprocessed data to the upper layer. Therefore, the total latency of all nodes on the $n$ layer can be expressed by
\begin{equation}\label{latency_layer_n}
  L_n = \sum_{j=1}^{M_{n-1}}\sum_{i\in\mathcal{Q}_{n-1}^{j}}\left[ \frac{s_n^i \lambda_n^i}{\theta_n^i} + \frac{\rho s_n^i \lambda_n^i + (1-s_n^i)\lambda_n^i + \beta_n^i}{\phi_{n}^{j,i}} \right],
\end{equation}
where ${s_n^i \lambda_n^i}/{\theta_n^i}$ represents the processing time for the offloaded raw data, and $\\{[\rho s_n^i \lambda_n^i + (1-s_n^i)\lambda_n^i + \beta_n^i]}/{\phi_{n}^{j,i}}$ implies the transmitting time of the node $i$ on the $n$-th layer.

We then define the system latency as below.
\begin{Def}
The \textbf{system latency} is the total latency of all tasks generated at the EDs per unit time, the latency of one task including the computing time and transmission time of the ED, all layers of MEC servers and the CC.
\end{Def}
Therefore, the system latency can be expressed as
\begin{equation}\label{TotalLatency}
L = \frac{\lambda_0^1}{\theta_0^1} + \sum_{n=1}^{N+1}L_n,
\end{equation}
where ${\lambda_0^1}/{\theta_0^1}$ represents the computing time of the CC.

\par Hence, the total latency minimization problem in the HetMEC network can be formulated as below.
\begin{align}\label{Latency_min}
\min_{\textbf{s},\bm{\theta},\bm{\phi}} &\quad L, \\
s.t.\ &
(\ref{NBlockCon_ED1}), (\ref{NBlockCon_ED3}), (\ref{NBlockCon_ED4}), (\ref{NBlockCon_AP1}), (\ref{NBlockCon_AP3}), (\ref{NBlockCon_AP4}), (\ref{NBlockCon_n}), (\ref{NBlockCon_CC}). \nonumber
\end{align}

\section{Latency Minimization Algorithm Design}\label{sec:algorithm_design}%
In this section, we propose a latency minimization algorithm (LMA) to solve problem (\ref{Latency_min}) via joint task assignment, computing and transmission resource allocation. We then analyze the influence of the number of MEC layers on the network robustness.

\subsection{Algorithm Design}
The system latency minimization problem described in (\ref{Latency_min}) is nonconvex, in which the task assignment strategy $\textbf{s}$, computing capacity allocation $\bm{\theta}$ and transmission resources allocation $\bm{\phi}$ are coupled.
By utilizing the Cauchy-Schwarz inequality \cite{CauchyInequality}, we can obtain the following inequations.
\begin{align}\label{CauchyInequation}
&L(\bm{s},\bm{\theta},\bm{\phi})=\frac{\lambda_0^1}{\theta_0^1} + \sum_{n=1}^{N+1}\sum_{j=1}^{M_{n-1}}\!\sum_{i\in\mathcal{Q}_{n-1}^{j}}\left[ \frac{s_n^i \lambda_n^i}{\theta_n^i} + \frac{\rho s_n^i \lambda_n^i + (1-s_n^i)\lambda_n^i + \beta_n^i}{\phi_{n}^{j,i}} \right]\nonumber \\
\geq &L_{min}(\bm{s}) \!=\!
 \left[\frac{\lambda_0^1}{\theta_{0}^{1,u}}\!+\!\sum_{n=1}^{N+1}\sum_{j=1}^{M_{n-1}}\!
 \sum_{i\in\mathcal{Q}_{n-1}^j}\frac{s_{n}^{i} \lambda_n^i}{\theta_{n}^{i,u}}\right] \!+\!
 \sum_{n=1}^{N+1}\sum_{j=1}^{M_{n-1}}\!\frac{\left(\sum_{i\in\mathcal{Q}_{n-1}^{j}}\!\!\sqrt{ \rho s_n^i \lambda_n^i + (1-s_n^i)\lambda_n^i + \beta_n^i } \right)^2}{\phi_{n-1}^j},
\end{align}
where $\theta_{n}^{i,u}$ and $\phi_{n-1}^j$ are the boundary of the computing and transmitting capacity.

\begin{prop}\label{Seperation}
 The task assignment strategy and resource allocation optimization can be separated in the proportional optimization problem (\ref{Latency_min}) by utilizing the Cauchy-Schwarz inequality.
\end{prop}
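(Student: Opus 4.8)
The plan is to establish the lower bound $L(\bm{s},\bm{\theta},\bm{\phi}) \geq L_{min}(\bm{s})$ in~(\ref{CauchyInequation}) and, crucially, to show the bound is \emph{tight}, i.e.\ that for every fixed feasible $\bm{s}$ there exists a choice of $(\bm{\theta},\bm{\phi})$ attaining it. Tightness is what actually makes the separation legitimate: it lets us replace $\min_{\bm{s},\bm{\theta},\bm{\phi}} L$ by $\min_{\bm{s}} L_{min}(\bm{s})$, with the resource allocation recovered afterwards in closed form from the optimal $\bm{s}$. First I would treat the computing terms. Each term $s_n^i\lambda_n^i/\theta_n^i$ is decreasing in $\theta_n^i$, and the only binding constraint is $\theta_n^i \leq \theta_n^{i,u}$ (recall~(\ref{NBlockCon_n}) forces $\lambda_n^i s_n^i = \theta_n^i = \theta_n^{i,u}$ in the non-congested regime, and likewise for the CC via~(\ref{NBlockCon_CC})); so the infimum over $\bm{\theta}$ is attained at $\theta_n^i=\theta_n^{i,u}$, producing the first bracket of $L_{min}$. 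The computing part therefore needs no inequality at all — just monotonicity.

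Next I would handle the transmission terms, which is where Cauchy--Schwarz enters. Fix a parent node $j$ on layer $n-1$ and write $a_i := \rho s_n^i \lambda_n^i + (1-s_n^i)\lambda_n^i + \beta_n^i$ for $i\in\mathcal{Q}_{n-1}^j$, so the transmission contribution of these children is $\sum_{i} a_i/\phi_n^{j,i}$, subject to $\sum_i \phi_n^{j,i} \leq \phi_{n-1}^j$ (constraints~(\ref{NBlockCon_AP3})--(\ref{NBlockCon_AP4}), resp.~(\ref{NBlockCon_ED3})--(\ref{NBlockCon_ED4}) when $n=N+1$). By the Cauchy--Schwarz inequality,
\begin{align}
\Big(\sum_{i\in\mathcal{Q}_{n-1}^j}\sqrt{a_i}\Big)^2
= \Big(\sum_{i\in\mathcal{Q}_{n-1}^j}\sqrt{\tfrac{a_i}{\phi_n^{j,i}}}\cdot\sqrt{\phi_n^{j,i}}\Big)^2
\leq \Big(\sum_{i\in\mathcal{Q}_{n-1}^j}\tfrac{a_i}{\phi_n^{j,i}}\Big)\Big(\sum_{i\in\mathcal{Q}_{n-1}^j}\phi_n^{j,i}\Big)
\leq \Big(\sum_{i\in\mathcal{Q}_{n-1}^j}\tfrac{a_i}{\phi_n^{j,i}}\Big)\phi_{n-1}^j,
\end{align}
which rearranges to $\sum_i a_i/\phi_n^{j,i} \geq (\sum_i\sqrt{a_i})^2/\phi_{n-1}^j$, matching the second sum of $L_{min}$ after summing over $j$ and $n$. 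Equality in Cauchy--Schwarz holds exactly when $a_i/\phi_n^{j,i}$ is constant in $i$, i.e.\ $\phi_n^{j,i} \propto \sqrt{a_i}$; combined with exhausting the budget $\sum_i\phi_n^{j,i}=\phi_{n-1}^j$ this pins down $\phi_n^{j,i}= \phi_{n-1}^j\sqrt{a_i}\big/\sum_{i'\in\mathcal{Q}_{n-1}^j}\sqrt{a_{i'}}$, a feasible allocation since $a_i\geq 0$. Adding the computing terms back, the bound $L_{min}(\bm{s})$ is achieved by this $(\bm{\theta},\bm{\phi})$, so $\min_{\bm{\theta},\bm{\phi}} L(\bm{s},\bm{\theta},\bm{\phi}) = L_{min}(\bm{s})$ and the two-stage ``optimize $\bm{s}$, then allocate'' decomposition is exact.

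I expect the main obstacle to be bookkeeping rather than a deep idea: one must verify that, after substituting $\theta_n^i=\theta_n^{i,u}$ and the Cauchy--Schwarz-optimal $\phi$'s, \emph{all} the remaining constraints in Proposition~\ref{prop:non-congested} that involve $\bm{s}$ alone are still the active feasibility region — in particular that fixing $\theta_n^i=\theta_n^{i,u}$ together with~(\ref{NBlockCon_n}) is not over-restrictive and that the recursively defined quantities $\lambda_n^i$ and $\beta_n^i$ (see~(\ref{Lambda_AP})--(\ref{Beta_AP})) depend on the decision variables only through $\bm{s}$ once the $\phi$'s are expressed via $\bm{s}$. A subtlety worth a remark is that $\lambda_n^{j,i}$ in~(\ref{Lambda_AP}) itself contains $\phi_{n+1}^{j,i}$, so the substitution must be carried out layer by layer from the bottom up, ensuring the dependence genuinely collapses to a function of $\bm{s}$; once that induction is set up, the separation claim of Proposition~\ref{Seperation} follows.
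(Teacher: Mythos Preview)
Your proposal is correct and follows essentially the same route as the paper: monotonicity in $\theta_n^i$ to replace each computing term by $s_n^i\lambda_n^i/\theta_n^{i,u}$, and Cauchy--Schwarz on each sibling group $\mathcal{Q}_{n-1}^j$ to collapse the transmission terms to $(\sum_i\sqrt{a_i})^2/\phi_{n-1}^j$, with equality precisely at the proportional allocation~(\ref{RatioEquation}). Your write-up is in fact more explicit than the paper's own Appendix~\ref{app:seperation}, which merely asserts the inequality, and the bookkeeping caveat you flag about the recursive dependence of $\lambda_n^{j,i}$ on $\phi_{n+1}^{j,i}$ is a genuine subtlety that the paper does not address.
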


\begin{proof}
See Appendix \ref{app:seperation}.
\end{proof}

\begin{prop}\label{CauchyEqua}
The computing capacity division and transmission resources allocation can be derived by the following equations once the task assignment percentage is determined.
\begin{align}\label{RatioEquation}
&\frac{\phi_{n}^{j,i}}{\phi_{n}^{j,i'}} = \frac{\sqrt{\rho s_n^i \lambda_n^i + (1-s_n^i)\lambda_n^i + \beta_n^i}}{\sqrt{\rho s_n^{i'} \lambda_n^{i'} + (1-s_n^{i'})\lambda_n^{i'} + \beta_n^{i'}}}, \\
&\theta_{0}^{1} = \theta_{0}^{1,u}, \theta_{n}^{i} = \theta_{n}^{i,u}, \ \forall 1\leq n \leq N+1, 1\leq j \leq M_{n}, i,i'\in \mathcal{Q}_{n-1}^{j}.\nonumber
\end{align}
\end{prop}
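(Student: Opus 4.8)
The plan is to recognise Proposition~\ref{CauchyEqua} as the exact equality characterisation of the Cauchy--Schwarz chain in~(\ref{CauchyInequation}): once $\bm s$ is frozen, Proposition~\ref{Seperation} tells us the residual problem is purely over $(\bm\theta,\bm\phi)$ and decouples across nodes (for the capacities) and across parents (for the transmission resources), so it suffices to identify, for each group of coupled variables, the choice that makes tight the two inequalities used to pass from $L(\bm s,\bm\theta,\bm\phi)$ to $L_{min}(\bm s)$. Throughout I would write $a_n^i:=\rho s_n^i\lambda_n^i+(1-s_n^i)\lambda_n^i+\beta_n^i$ for the volume node $i$ on layer $n$ forwards upward; in the non-congested regime of Section~\ref{sec:formulation} (full transmission utilisation) every $\lambda_n^i$ and $\beta_n^i$, hence every $a_n^i$, is a function of $\bm s$ alone, so these are constants in the residual problem.

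For the computing variables: each summand $s_n^i\lambda_n^i/\theta_n^i$ and the term $\lambda_0^1/\theta_0^1$ is strictly decreasing in its own capacity, and no shared constraint links these capacities, so each should be taken as large as feasible. For $1\le n\le N+1$ constraint~(\ref{NBlockCon_n}) already pins $\theta_n^i=\theta_n^{i,u}$; for the CC the only restriction above $\theta_0^1$ is $\theta_0^1\le\theta_0^{1,u}$ (the other CC constraint~(\ref{NBlockCon_CC}) is a lower bound), so $\theta_0^1=\theta_0^{1,u}$ at the optimum. This is the second line of~(\ref{RatioEquation}), and it collapses the computing part of $L(\bm s,\bm\theta,\bm\phi)$ onto the first bracket of $L_{min}(\bm s)$.

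For the transmission variables, fix a parent node $j$ on layer $n-1$ with child set $\mathcal Q_{n-1}^j$ and budget $\phi_{n-1}^j$ (constraints~(\ref{NBlockCon_ED4}) and~(\ref{NBlockCon_AP4})). The induced sub-problem is $\min\sum_{i\in\mathcal Q_{n-1}^j}a_n^i/\phi_n^{j,i}$ subject to $\sum_i\phi_n^{j,i}\le\phi_{n-1}^j$, $\phi_n^{j,i}>0$, a separable convex program. I would solve it by the equality case of Cauchy--Schwarz, $\bigl(\sum_i\sqrt{a_n^i}\bigr)^2=\bigl(\sum_i(\sqrt{a_n^i}/\sqrt{\phi_n^{j,i}})\sqrt{\phi_n^{j,i}}\bigr)^2\le\bigl(\sum_i a_n^i/\phi_n^{j,i}\bigr)\bigl(\sum_i\phi_n^{j,i}\bigr)$, noting that (i) scaling all $\phi_n^{j,i}$ up only decreases the objective, so the budget binds and $\sum_i\phi_n^{j,i}=\phi_{n-1}^j$, and (ii) equality forces $\sqrt{a_n^i}/\phi_n^{j,i}$ to be independent of $i$, i.e.\ $\phi_n^{j,i}\propto\sqrt{a_n^i}$ --- equivalently the KKT stationarity condition $-a_n^i/(\phi_n^{j,i})^2+\mu=0$ with multiplier $\mu>0$. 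Either route gives the ratio in the first line of~(\ref{RatioEquation}); substituting it back produces the $\bigl(\sum_i\sqrt{a_n^i}\bigr)^2/\phi_{n-1}^j$ terms, so the bound $L\ge L_{min}(\bm s)$ is attained and the displayed allocation is globally optimal for the fixed $\bm s$.

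The step I expect to need the most care is the claim, used above, that each $a_n^i$ may be treated as a constant: by~(\ref{Lambda_AP}) the raw-data rate $\lambda_n^{j,i}$ is written in terms of the transmission capacity $\phi_{n+1}^{j,i}$ allocated a layer below, so a priori the per-parent sub-problems are chained rather than independent. I would discharge this either by invoking Proposition~\ref{Seperation} directly (which already asserts the separation) or, self-containedly, by organising the optimisation bottom-up: the EDs' rates $\lambda_{N+1}^i$ are data, their $\phi$'s are then fixed by the proportional rule, which under full utilisation makes layer~$N$'s rates $\lambda_N^{j,i}=(1-s_{N+1}^i)\lambda_{N+1}^i$ depend on $\bm s$ only, and so on up to the CC. Strict convexity of $a/\phi$ then makes the proportional allocation the unique minimiser at every stage, which upgrades the argument to the uniqueness implicit in~(\ref{RatioEquation}).
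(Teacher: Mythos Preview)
Your proposal is correct and follows essentially the same route as the paper: both identify~(\ref{RatioEquation}) as precisely the equality condition of the Cauchy--Schwarz step in~(\ref{CauchyInequation}), together with monotonicity in the $\theta$'s forcing them to their upper bounds. Your write-up is considerably more detailed than the paper's one-paragraph proof---in particular, your discussion of the potential cross-layer coupling through~(\ref{Lambda_AP}) and the bottom-up resolution of it addresses a subtlety the paper simply leaves implicit---but the underlying argument is the same.
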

\begin{proof}
According to (\ref{CauchyInequation}), the system latency $L(\bm{s},\bm{\theta},\bm{\phi})$ achieves a minimum $L_{min}(\bm{s})$ when the equality holds.
Based on the Cauchy-Schwarz inequality \cite{CauchyInequality}, the equality holds if and only if the equations in  (\ref{RatioEquation}) are satisfied, implying that the computing capacity division $\bm{\theta}$ and transmission resources allocation $\bm{\phi}$ can be derived from the task assignment percentage~$\bm{s}$.
\end{proof}

\begin{prop}\label{prop:Concave}
The task assignment problem (\ref{LatencyConvert}) is concave, and the optimal results $\bm{s}^*$ are at the vertex of the feasible set bounded by the constraints.
\begin{align}\label{LatencyConvert}
\min_{\bm{s}} &\quad L_{min}(\bm{s}), \\
s.t.\ &(\ref{NBlockCon_ED1}), (\ref{NBlockCon_ED3}), (\ref{NBlockCon_ED4}), (\ref{NBlockCon_AP1}), (\ref{NBlockCon_AP3}), (\ref{NBlockCon_AP4}), (\ref{NBlockCon_n}), (\ref{NBlockCon_CC}), (\ref{RatioEquation}). \nonumber
\end{align}
\end{prop}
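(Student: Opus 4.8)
The plan is to reduce the statement of Proposition~\ref{prop:Concave} about problem~(\ref{LatencyConvert}) to two standard facts and verify each. \emph{Fact one:} the objective $L_{min}(\bm s)$ is a concave function of the task-assignment vector $\bm s$. \emph{Fact two:} the set described by the constraints $(\ref{NBlockCon_ED1})$, $(\ref{NBlockCon_ED3})$, $(\ref{NBlockCon_ED4})$, $(\ref{NBlockCon_AP1})$, $(\ref{NBlockCon_AP3})$, $(\ref{NBlockCon_AP4})$, $(\ref{NBlockCon_n})$, $(\ref{NBlockCon_CC})$, $(\ref{RatioEquation})$, after eliminating $\bm\theta$ and $\bm\phi$ via Proposition~\ref{CauchyEqua}, is a convex polytope in $\bm s$. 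Granting these, the proposition follows from the elementary fact of concave minimization: a polytope is the convex hull of its finitely many vertices, so for a concave $L_{min}$ and any feasible $\bm s=\sum_v \mu_v v$ with $\mu_v\ge 0$, $\sum_v\mu_v=1$, one has $L_{min}(\bm s)\ge \sum_v \mu_v L_{min}(v)\ge \min_v L_{min}(v)$; since the vertices are themselves feasible, the minimum is attained at a vertex.

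For Fact one I would split the expression for $L_{min}(\bm s)$ in $(\ref{CauchyInequation})$ into the computing part $\frac{\lambda_0^1}{\theta_0^{1,u}}+\sum_{n}\sum_{j}\sum_{i}\frac{s_n^i\lambda_n^i}{\theta_n^{i,u}}$ and the transmission part $\sum_n\sum_j \frac{1}{\phi_{n-1}^j}\left(\sum_{i\in\mathcal Q_{n-1}^j}\sqrt{g_n^i}\right)^{2}$, where $g_n^i:=\rho s_n^i\lambda_n^i+(1-s_n^i)\lambda_n^i+\beta_n^i$ is the total data volume leaving node $i$ on layer $n$. The computing part is affine in the decision variables --- indeed, under $(\ref{NBlockCon_n})$ each term $s_n^i\lambda_n^i/\theta_n^{i,u}$ equals $1$ and is constant on the feasible set, and after the change of variables to the per-node raw- and processed-data volumes propagated up the tree the remaining cloud term $\lambda_0^1/\theta_0^{1,u}$ is affine as well --- hence concave. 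For the transmission part the key auxiliary fact is that $h(\bm x)=\left(\sum_{i=1}^{k}\sqrt{x_i}\right)^{2}$ is concave and componentwise nondecreasing on $\mathbb R_{\ge 0}^k$: expanding $h(\bm x)=\sum_i x_i+2\sum_{i<j}\sqrt{x_i x_j}$, the linear term is concave and each cross term $\sqrt{x_i x_j}$ is a geometric mean of two nonnegative variables whose Hessian is negative semidefinite; monotonicity is immediate since $\partial h/\partial x_i=\left(\sum_j\sqrt{x_j}\right)/\sqrt{x_i}\ge 0$. Because each $g_n^i$ is a nonnegative affine function of the variables, and the composition of a concave nondecreasing function with an affine map is concave, every transmission term is concave; a nonnegative-weighted sum of concave functions is concave, so $L_{min}(\bm s)$ is concave.

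For Fact two, the box constraints $(\ref{NBlockCon_ED1})$ and $(\ref{NBlockCon_AP1})$ and the equalities $(\ref{NBlockCon_n})$, $(\ref{NBlockCon_CC})$ are affine, while $(\ref{NBlockCon_ED3})$--$(\ref{NBlockCon_ED4})$ and $(\ref{NBlockCon_AP3})$--$(\ref{NBlockCon_AP4})$ become affine once $\bm\theta=\bm\theta^{u}$ and the $\phi$'s are substituted from $(\ref{RatioEquation})$ --- the ratios of the $\phi$'s within each parent are fixed and their sum is capped, which pins them down. \textbf{The main obstacle} I expect is precisely this elimination together with the implicit dependence of $\lambda_n^i$ and $\beta_n^i$ on \emph{all} lower-layer divisions through $(\ref{Lambda_AP})$--$(\ref{Lambda_CC})$: one must check carefully that after substituting $(\ref{RatioEquation})$ and adopting the tree-propagated data-volume variables, the objective stays concave \emph{and} every constraint stays affine, proceeding layer by layer from the EDs upward so that the quantities entering a given layer are already expressed in the new variables. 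Once that bookkeeping is settled, Facts one and two combine to yield the claim.
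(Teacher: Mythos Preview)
Your approach is correct and genuinely different from the paper's. The paper establishes concavity by restricting to a single parent with $M$ children, writing out the Hessian of $L_{min}$ entry by entry, and then proving $\textbf{x}^T\textbf{H}_M\textbf{x}\le 0$ by induction on $M$; it then appeals to Proposition~\ref{prop:non-congested} to assert linearity of the constraints. You instead obtain concavity structurally: expand $(\sum_i\sqrt{x_i})^2=\sum_i x_i+2\sum_{i<j}\sqrt{x_ix_j}$, note each cross term is a two-variable geometric mean (hence concave), conclude the outer map is concave and nondecreasing, and precompose with the affine data-volume maps $g_n^i$. This buys you a cleaner, essentially computation-free argument that scales to any number of children without induction, and it isolates exactly why concavity holds. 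The paper's explicit Hessian, on the other hand, gives concrete second-derivative formulas that could be reused for sensitivity analysis. Both routes share the same unfinished piece you flagged as the main obstacle: the paper's Appendix~\ref{app:Concave} also works only in the one-parent model and does not carry out the layer-by-layer propagation showing that $\lambda_n^i$, $\beta_n^i$, and the substituted $\phi$'s remain affine in the decision variables across the full tree, so your caveat about the bookkeeping through $(\ref{Lambda_AP})$--$(\ref{Lambda_CC})$ is apt rather than a defect relative to the paper.
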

\begin{proof}
See Appendix \ref{app:Concave}
\end{proof}

According to the aforementioned propositions, we can obtain the optimal results of the latency minimization problem by searching all the vertexes of the feasible set bounded by the constraints. The latency minimization algorithm is summarized in \textbf{Algorithm 1}. 

\begin{algorithm}[t]
\label{alg:latencymin}
\caption{Latency minimization algorithm} 
\hspace*{0.02in} {\bf Input:} 
  Computing capacity $\theta_{n}^{i}$, 
  upper bound of the transmission resource of each node $\phi_{n}^{i,0}$, 
  data generation speed $\bm{\lambda}$.\\
\hspace*{0.02in} {\bf Output:} 
Task assignment strategy $\bm{s}^*$, resources allocation scheme $\bm{\theta}^*, \bm{\phi}^*$.
\begin{algorithmic}[1]
\ForAll{Vertex of the feasible set}
    \State Obtain the corresponding task assignment strategy $\bm{s}$.
    \State Obtain the resource allocation scheme $\bm{\theta}, \bm{\phi}$ according to $\bm{s}$ and (\ref{RatioEquation}).
    \If{ Non-congested constraints are satisfied }
        \If{$L_{min}(\bm{s}) < L_{min}(\bm{s}^*)$}
            \State $L_{min}(\bm{s}^*) = L_{min}(\bm{s})$.
            \State Update the optimal $\bm{s}^*, \bm{\theta}^*$ and $\bm{\phi}^*$.
        \EndIf
    \EndIf
\EndFor
\end{algorithmic}
\end{algorithm}

We first convert the proportional optimization problem in (\ref{Latency_min}) into the of task assignment problem in (\ref{CauchyInequation}) by utilizing Cauchy-Schwarz inequality. Since the converted task assignment problem is proved concave according to \textbf{Proposition \ref{prop:Concave}}, we calculate the system latency at each vertex of the feasible set bounded by the non-congested constraints, i.e., $L_{min}(\bm{s})$, where $\bm{s}$ is determined by the constraints associated with the discussing vertex. The computing and transmission resource allocation are also determined once the task assignment strategy $\bm{s}$ is fixed. After considering all the vertexes, the minimum system latency together with the optimal task assignment strategy and resource allocation can be obtained.

\begin{remark}
Assuming the number of nodes in the whole network is denoted by $M$,
the complexity of the latency minimization algorithm is proportional to the number of the vertexes of the feasible set. The maximum number of the vertexes of the feasible set is in the square magnitude of $M$, i.e., $O(M^2)$.
\end{remark}
\begin{proof}
See Appendix \ref{app:complexity}.
\end{proof}

\subsection{Network Robustness Analysis}
In this subsection, we discuss the relation between the network robustness and the number of MEC layers of the HetMEC network, influenced by the amount of computing and transmission resources.

The network robustness is reflected by the network processing capacity, which is defined by the maximum data generation speed at the ED supported by the non-congested HetMEC networks.
The ED can start processing the new arrival task once it finishes processing its own part of the previous task, without waiting it to be processed completely and transmitted to the CC.
Intuitively, a network is more robust when more resources are available brought by a newly added layer. However, this may only be true when the added layer is selected properly, as will be analyzed in both computing and transmission resource shortage cases as below.

\subsubsection{Computing Resource Shortage Case}

\begin{figure}[t]
\centering
\includegraphics[width=0.95\textwidth]{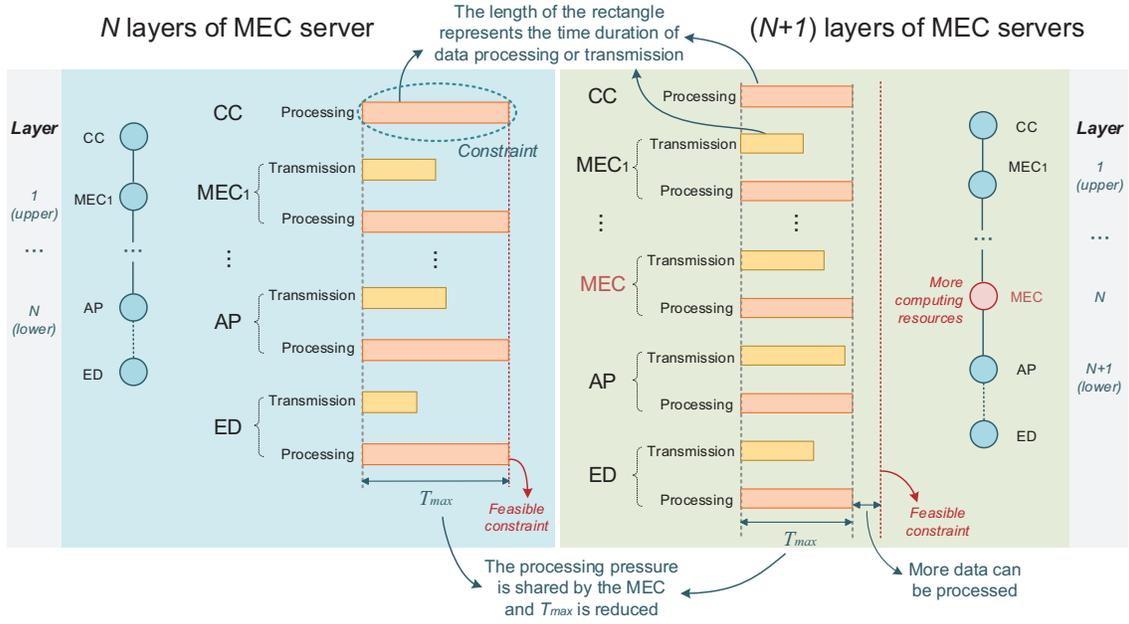}
\caption{The computing resource shortage case.}\label{fig:casestudy1}
\end{figure}

As shown in Fig.~\ref{fig:casestudy1},
the bottleneck of the network processing capacity lies in the limited computing resources, which can be expressed by
\begin{align}\label{Computing limited}
  & \lambda_n^is_n^i = \theta_n^{i,u}, \forall 1\leq n\leq N, 1\leq i\leq M_n, \\
  & \sum_{j\in\mathcal{Q}_{n-1}^{k}}\left(\rho\lambda_{n}^{j} s_{n}^{j}+\lambda_{n}^{j}(1 - s_{n}^{j}) + \beta_{n}^{j}\right) < \phi_{n-1}^{k}, \forall 2\leq n\leq N, 1\leq k\leq M_{n-1}.
\end{align}
In this case, the computing resources of all layers are fully utilized, while there still remains the idle transmission resources in the HetMEC network. The network will be in congestion if the data generation speed $\lambda$ continues to increase.

When adding a layer of MEC servers between the $(n_0-1)$-th and $n_0$-th layer of the initial network\footnote{The added layer becomes the $n_0$-th layer, and the initial $n_0$-th layer becomes the $(n_0+1)$-th layer.}, in order to increase the network robustness, the computing and transmission resources of the new added layer should satisfy that
\begin{align}\label{Computing_addcondition}
  & \theta_{n_0}^{i,u}>0, \forall 1\leq i\leq M_{n_0}, \\
  & \sum_{j\in\mathcal{Q}_{n_0}^{i}}\left(\rho\lambda_{n_0+1}^{j} s_{n_0+1}^{j}+\lambda_{n_0+1}^{j}(1 - s_{n_0+1}^{j}) + \beta_{n_0+1}^{j}\right) < \phi_{n_0}^{i}, \forall 1\leq i\leq M_{n_0},
\end{align}
where each node on the added layer possesses the computing resources, and the transmission resources of each node on the added layer are sufficient enough to transmit all the data from the lower layer (i.e., $(n_0+1)$-th layer in the $(N+1)$-layer network). Therefore, the added layer can relieve the processing pressure of the other layers.
As the data generation speed $\lambda$ continues to increase, the task division percentage $s_{n}^i, 1\leq i\leq M_{n},$ on any other layer, $n\neq n_0,$ can be reduced, and the task division percentage $s_{n_0}^i, 1\leq i\leq M_{n_0},$ increases until the following conditions are satisfied.
\begin{align}
  &\frac{\lambda_{n_0}^i s_{n_0}^i}{\theta_{n_0}^{i}} = \frac{\lambda_n^j s_n^j}{\theta_n^{j}}, \forall n\neq n_0, 1\leq i\leq M_{n_0}, 1\leq j\leq M_{n},  \\
  &\lambda_{n_0}^i s_{n_0}^i \leq \theta_{n_0}^{i,u}, \forall 1\leq i\leq M_{n_0}, \\
  & \sum_{j\in\mathcal{Q}_{n-1}^{k}}\left(\rho\lambda_{n}^{j} s_{n}^{j}+\lambda_{n}^{j}(1 - s_{n}^{j}) + \beta_{n}^{j}\right) < \phi_{n-1}^{k}, \forall 2\leq n\leq N+1, 1\leq k\leq M_{n-1}.
\end{align}
The computing resources of the whole HetMEC network are then fully utilized where the processing time on all layers are equal and limited by the non-congested computing constraint.


\subsubsection{Transmission Resource Shortage Case}
\begin{figure}[t]
\centering
\includegraphics[width=0.95\textwidth]{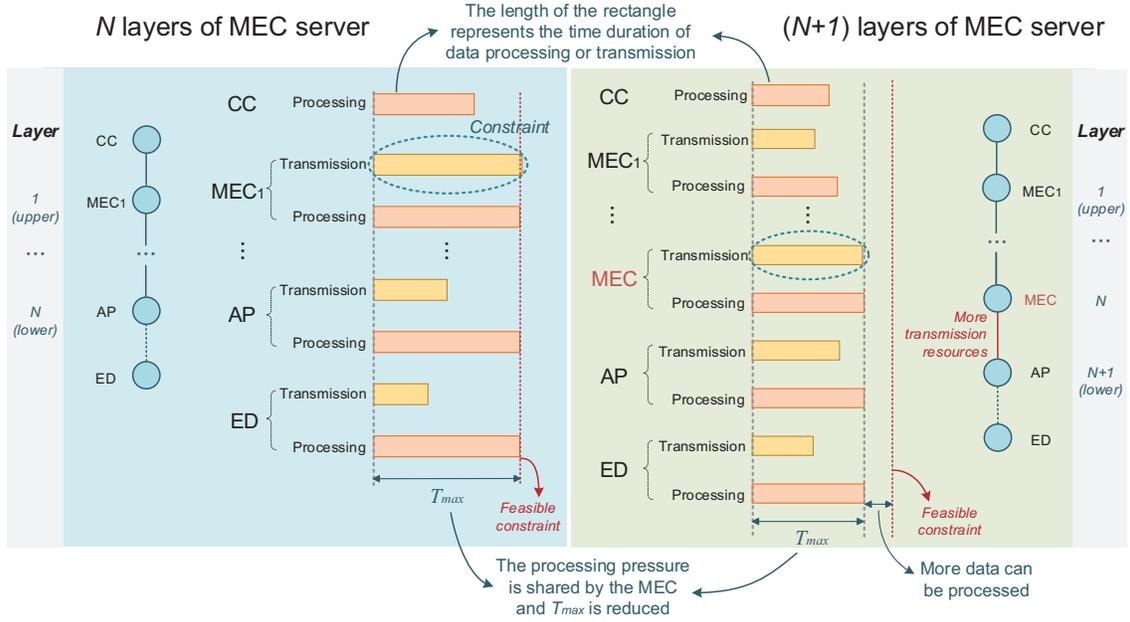}
\caption{The transmission resource shortage case.}\label{fig:casestudy2}
\end{figure}

As shown in Fig. \ref{fig:casestudy2}, The bottleneck of the the network processing capacity lies in the transmission resources of one layer $n_0$, and the layer is determined by the conditions described in (4), (5), (12) and (13).
The transmission resources of the $n_0$-th layer have been fully utilized for the data transmission from the $(n_0+1)$-th layer to the $n_0$-th layer, which can be expressed by
\begin{align}\label{Transmission limited}
  & \lambda_n^is_n^i < \theta_n^{i,u}, \forall 0\leq n\leq N+1, 1\leq i\leq M_n, \\
  & \sum_{j\in\mathcal{Q}_{n_0}^{k}}\left(\rho\lambda_{n_0+1}^{j} s_{n_0+1}^{j}+\lambda_{n_0+1}^{j}(1 - s_{n_0+1}^{j}) + \beta_{n_0+1}^{j}\right) = \phi_{n_0}^{k}, \forall 1\leq k\leq M_{n_0+1}.
\end{align}
The network will be congested between the $(n_0+1)$-th layer and the $n_0$-th layer if the data generation speed $\lambda$ continues to increase.

The robustness can be enhanced only when adding a layer of MEC servers between the $(n_0+1)$-th layer and the $n_0$-th layer, or below the $(n_0+1)$-th layer.
It does not make any contribution to the robustness enhancement to add a layer when the layer number is smaller than $n_0$ (i.e., the added layer is above the $n_0$-th layer). This is because the transmission resources are allocated by the parent node on the upper adjacent layer, and the operation of adding a layer of MEC servers above the $n_0$-th layer cannot increase the transmission resources of the $n_0$-th layer or reduce the amount of data transmitted from the $(n_0+1)$-th layer.

When adding a layer of MEC servers below the initial $n_0$-th layer, denoted by the $n'$-th layer\footnote{The added layer becomes the $n'$-th layer, and the initial $n'$-th layer becomes the $(n'+1)$-th layer, and $n'>n_0$.}, in order to increase the network robustness, the computing and  transmission resources of the added layer should satisfy that
\begin{align}\label{Transmission_addcondition}
  & \theta_{n'}^{i,u}>0, \forall 1\leq i\leq M_{n'}, \\
  & \sum_{j\in\mathcal{Q}_{n'}^{i}}\left(\rho\lambda_{n'+1}^{j} s_{n'+1}^{j}+\lambda_{n'+1}^{j}(1 - s_{n'+1}^{j}) + \beta_{n'+1}^{j}\right) < \phi_{n'}^{i}, \forall 1\leq i\leq M_{n'},
\end{align}
where each node on the added layer possesses the computing resources, and the transmission
resources are sufficient enough to transmit all the data from the lower layer.

Tasks with a larger data generation speed can be processed since condition (31) has changed to the condition that the amount of the transmitted data from the $(n_0+1)$-th layer is strictly smaller than the transmission resources. Therefore, the network can remain non-congested when the data generation speed $\lambda$ continues to increase. Condition (31) changes in two cases:
\begin{itemize}
  \item If a layer is added between the $(n_0+1)$-th layer and the $n_0$-th layer, condition (31) changes because the transmission resources are more abundant\footnote{This is to say, the right side of (31) becomes larger.}.
  Therefore, the amount of data to be transmitted from the $(n_0+1)$-th layer satisfies that \\ $\sum_{j\in\mathcal{Q}_{n_0}^{k}}\left(\rho\lambda_{n_0+1}^{j} s_{n_0+1}^{j}+\lambda_{n_0+1}^{j}(1 - s_{n_0+1}^{j}) + \beta_{n_0+1}^{j}\right) < \phi_{n_0+1}^k.$
  \item If a layer is added below the $(n_0+1)$-th layer, condition (31) changes because the amount of data to transmit is reduced\footnote{This is to say, the left side of (31) becomes smaller.}.
  We assume that $\lambda_0$ volume of data that originally being processed on the $(n_0+1)$-th layer are offloaded to the added layer.  The amount of data to be transmitted from the $(n_0+1)$-th layer is reduced and satisfies that \\ $\sum_{j\in\mathcal{Q}_{n_0}^{k}}\left(\rho\lambda_{n_0+1}^{j} s_{n_0+1}^{j}+\lambda_{n_0+1}^{j}(1 - s_{n_0+1}^{j}) + \beta_{n_0+1}^{j} - (1-\rho)(1-s_{n_0+1}^{j})\lambda_0\right) < \phi_{n_0+1}^k$.
\end{itemize}

We summarize the relation between the network robustness and the number of MEC layers influenced by the amount of computing and transmission resources.
In the computing resource shortage case, the network robustness can be enhanced if the MEC servers which are enabled the computing capacity on the new added layer satisfies (\ref{Computing_addcondition}) and (26).
In the transmission resource shortage case, the network robustness can be enhanced only if a layer of MEC servers satisfying (\ref{Transmission_addcondition}) and (33) are added \emph{below} the initial transmission resource constrained MEC layer.


\section{Simulation Results}\label{sec:simulation_results}%
In this section, we evaluate the system latency and the processing rate in the HetMEC network performing our algorithm LMA and other task assignment schemes. The robustness in the HetMEC network with different number of layers is also investigated.

\subsection{Parameters Setting}
In our simulation, the parameters about the data processing are presented in Table \ref{specifications}. The EDs transmit the file of size $60$ Kbits in a period of $1$s. The compression ratio of the raw data after processing is set as $10\%$. The computing capacity is represented by the maximum volume of the processed data per second, and the transmission resources are reflected by the transmission bandwidth.
\begin{table}[t]
	\centering
	\caption{HetMEC network Parameters}
\vspace{-3mm}
	\begin{tabular}{l|l}
		\hline \hline
        Parameters & Value \\
        \hline \hline
		The volume of the data file & 60 Kbits\\
		\hline
		Compression ratio $\rho$ & $10\%$ \\
		\hline
		The period of data generation & 1s \\
		\hline \hline
	\end{tabular} \label{specifications}
\end{table}
\begin{table}[t]
\small
	\centering
	\caption{Network architecture of the HetMEC network with different number of MEC layers.}
\vspace{-3mm}
	\begin{tabular}{l|c|c|c|c}
		\hline \hline
		      & Cloud-only & One-layer HetMEC& Two-layer HetMEC& Three-layer HetMEC\\
		\hline \hline
		 Consideration of the EDs      & \checkmark  & \checkmark & \checkmark & \checkmark\\
		\hline
		\multirow{2}*{Number of the MEC layer}   & \multirow{2}*{0} & 1 & 2& 3 (AP, switch,\\
        ~&~& (AP)& (AP, switch) &network gateway)\\
		\hline
		{Consideration of the CC}     & \checkmark & \checkmark & \checkmark & \checkmark\\
		\hline \hline
	\end{tabular} \label{Table:architecture}
\end{table}
We consider four cases in our simulation, i.e., the cloud-only network, 1-layer, 2-layer and 3-layer HetMEC networks, as shown in Table \ref{Table:architecture}. The computing capacity and transmission resource settings of the three cases are presented in Table \ref{Compcases}.
In all cases, the data generated at the EDs need to be aggregated at the CC through the MEC servers, while the task assignment strategy is performed among the EDs, all layers of MEC servers and the CC.
\begin{table}[t]
	\centering
	\caption{Computing capacity and transmission resource setting of the HetMEC network}
\vspace{-3mm}
	\begin{tabular}{c|c|c}
		\hline \hline
		Node     & Computing capacity & Transmission resources\\
		\hline \hline
		 ED       & $0.12$ Mbps &-\\
		\hline
		AP        & $0.4$ Mbps & $1.2$ Mbps\\
		\hline
		Switch (Lower-layer MEC server)     & $1.5$Mbps & $3$ Mbps\\
		\hline
		Network gateway (Upper-layer MEC server)          & $4.2$ Mbps & $4.8$ Mbps\\
		\hline
		CC          & $12$ Mbps & $12$ Mbps\\
		\hline \hline
	\end{tabular} \label{Compcases}
\end{table}

\subsection{Network Performance Evaluation}
The network performance is evaluated based on the following metrics:
\begin{itemize}
	\item{\textbf{System Latency}: The total latency of all tasks generated at the EDs per second.}
	\item{\textbf{Processing Rate}: The average volume of data processed by the HetMEC network per second viewed by each ED.}
	\item{\textbf{Network Robustness}: The network robustness is represented by the maximum data generation speed supported by the non-congested HetMEC network.}
\end{itemize}

We compare the LMA with the following task assignment schemes on these metrics.
\begin{itemize}
  \item \emph{Cloud computing}: The cloud computing scheme indicates that all the data are processed at the CC.
  \item \emph{Local computing}: The local computing scheme indicates that all the data are processed at the ED without being offloaded to the MEC servers or CC.
  \item \emph{Conventional MEC}: The conventional MEC scheme indicates that the data are totally offloaded to the APs for processing.
\end{itemize}

\subsubsection{System Latency Evaluation}

\begin{figure}[t]
	\centering
	\subfigure[One-layer HetMEC network]{\includegraphics[width=0.49\textwidth]{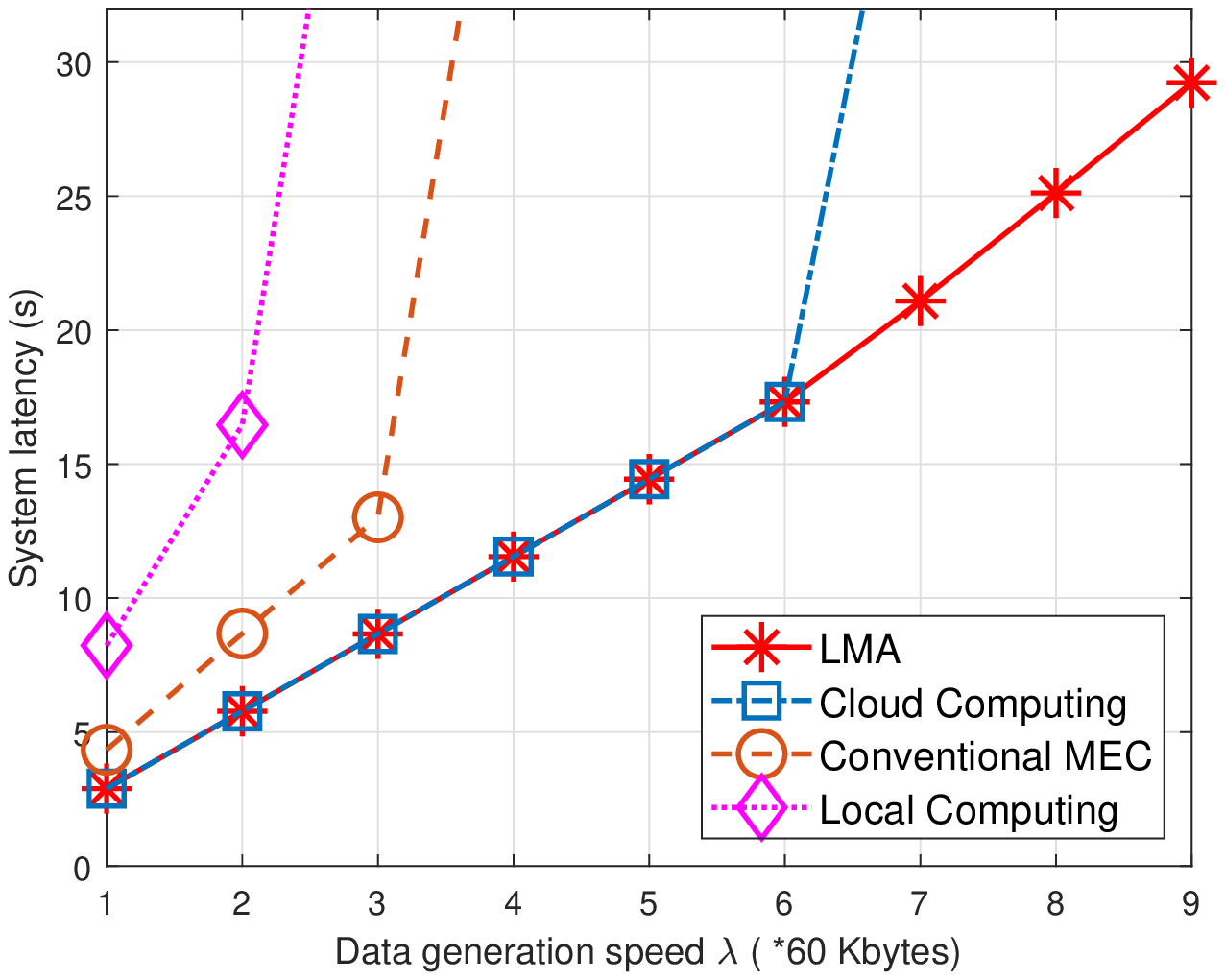}}
	\subfigure[Two-layer HetMEC network]{\includegraphics[width=0.49\textwidth]{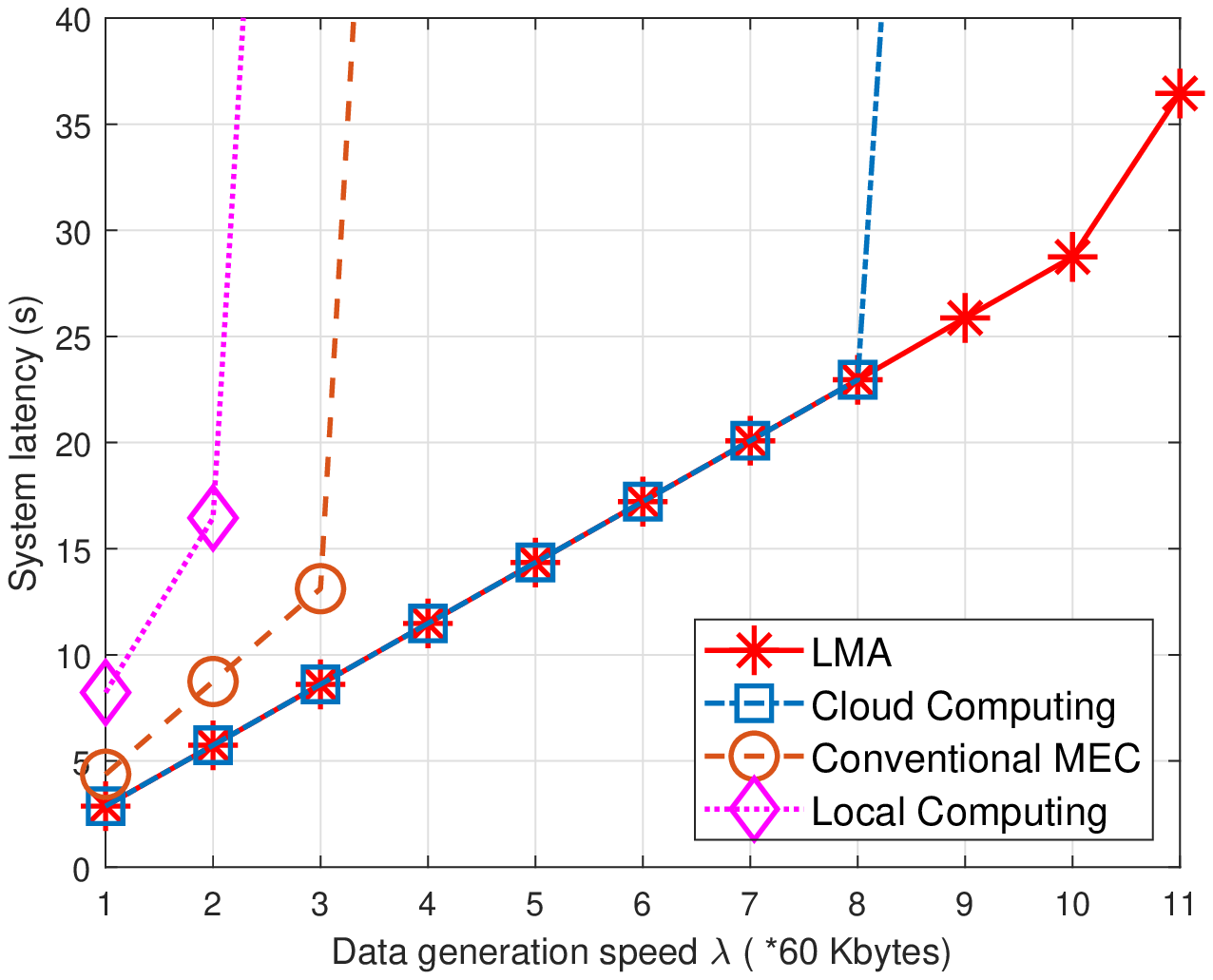}}

	\caption{ The system latency vs. the data generation speed in the different cases.}
	\label{fig:latency}
\end{figure}

\par Fig. \ref{fig:latency} presents the system latency versus the data generation speed $\lambda$ at the ED in different cases.
In the one-layer HetMEC network, as shown in Fig. \ref{fig:latency}(a), the system latency of all schemes increases with the data generation speed, since more data need to be processed, resulting in larger system latency.
By performing our proposed algorithm LMA, the system latency remains the lowest given different data generation speed.
When the data generation speed $\lambda>6$, the slop of the line segment of LMA becomes larger, reflecting that the average latency increases. When the data generation speed is small, the latency is smallest when the data are processed at the uppermost layer.
When the data generation speed is large, the data need to be offloaded to other layers due to the limited computing capacity of the uppermost layer, which induces the increase of the average latency.

In the two-layer HetMEC network, as shown in Fig. \ref{fig:latency}(b), the system latency of the LMA also remains the lowest in different cases given different data generation speeds, which reflects the advantages of the LMA in the HetMEC network. Compared with other schemes, the LMA jointly utilizes the computing capacity and transmission resources of all devices, and thus, its latency remains low with the data generation speed increasing.
At the data generation speed $\lambda=11$, the new generated data can be processed in real time when performing our algorithm LMA, showing the robustness of our scheme.
Given the same data generation speed, the system latency of the two-layer HetMEC network performing the LMA is not larger than that of the one-layer HetMEC network, and the robustness of the two-layer HetMEC network performing the LMA is stronger than that of the one-layer HetMEC network.



\subsubsection{Processing Rate Evaluation}

\begin{figure}[t]
	\centering
	\subfigure[One-layer HetMEC network]{\includegraphics[width=0.49\textwidth]{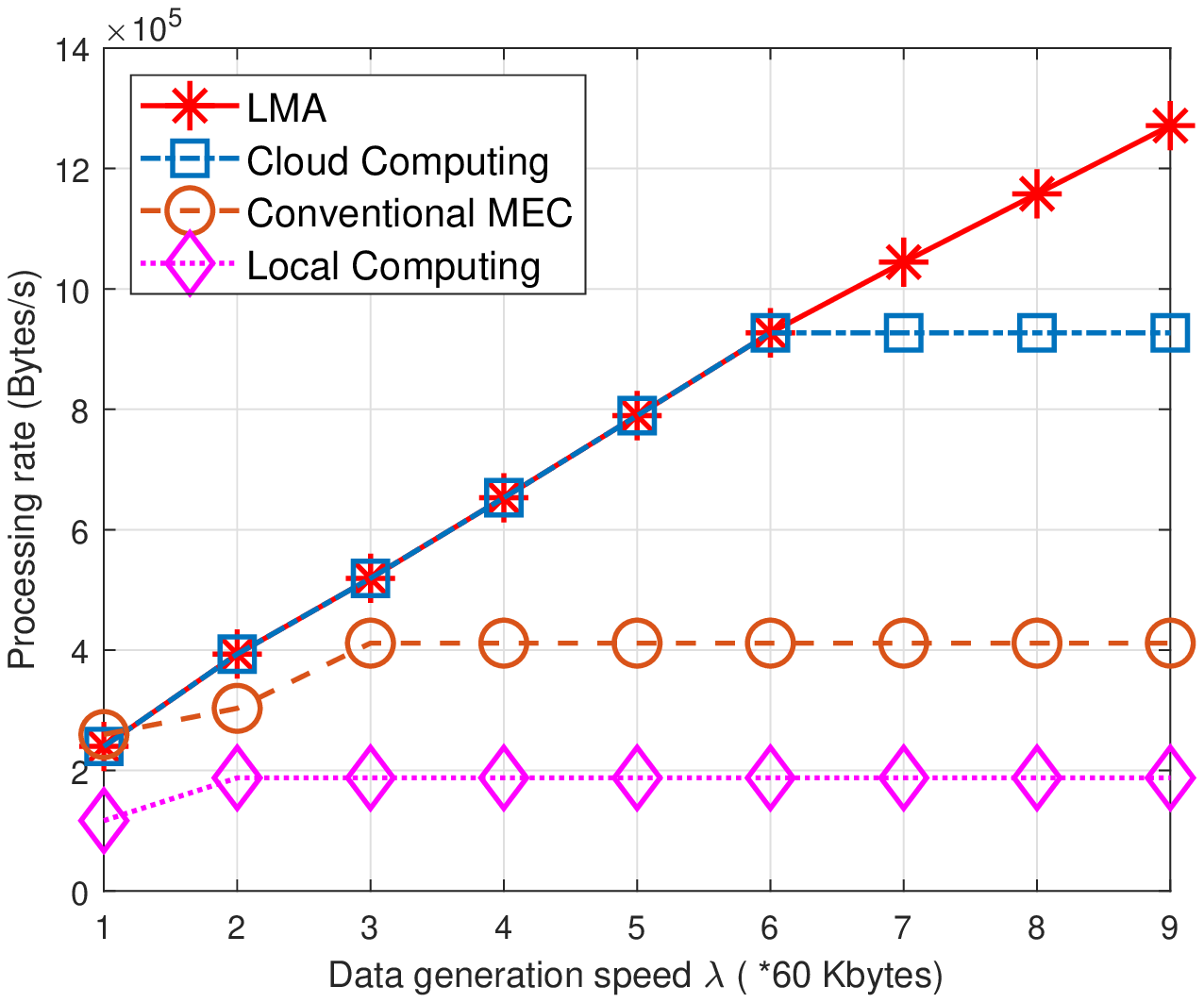}}
    \subfigure[Two-layer HetMEC network]{\includegraphics[width=0.49\textwidth]{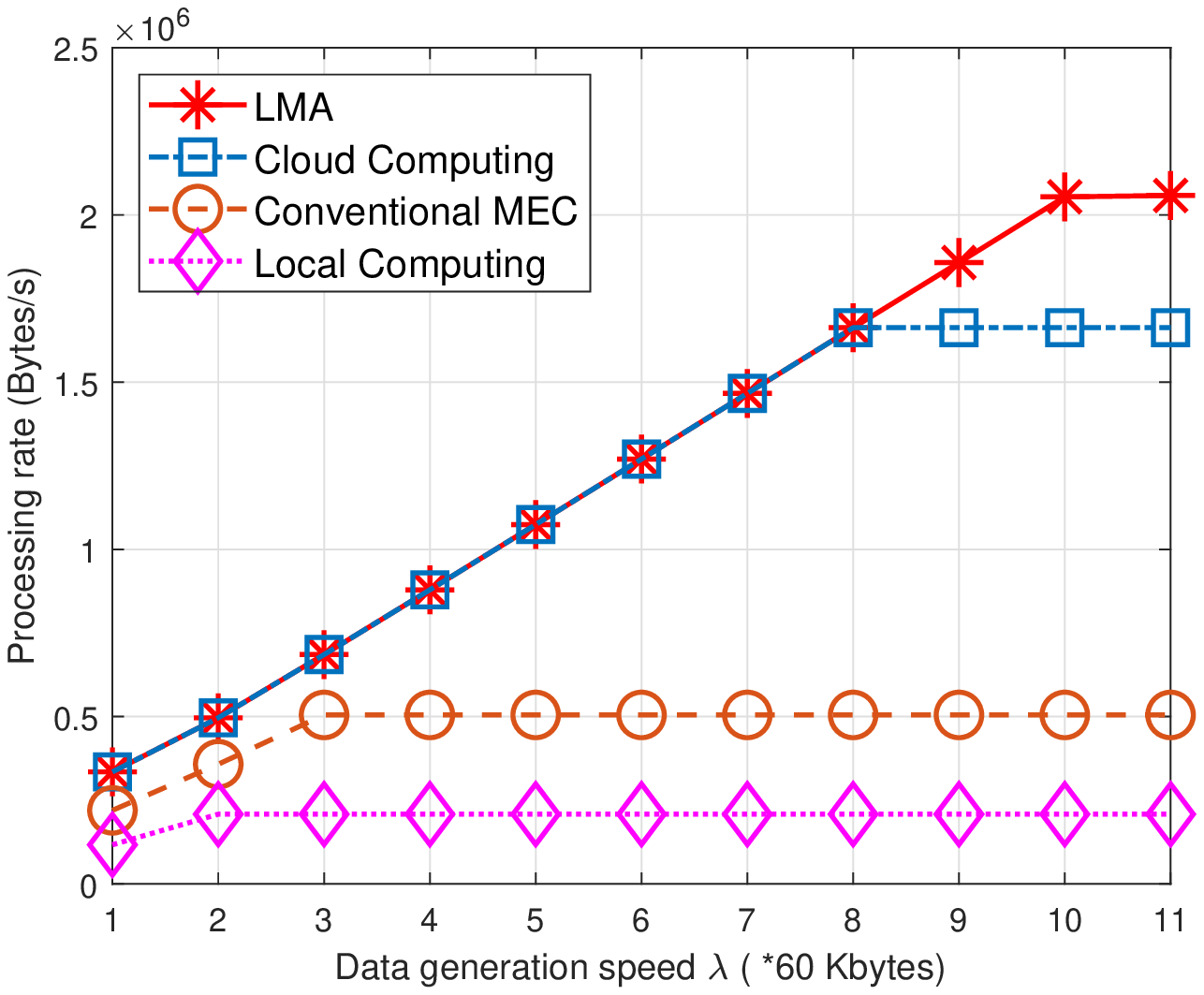}}
	\caption{The processing rate with the increase of the data generation speed in the HetMEC network.}
	\label{Fig:computational_throughput}
\end{figure}

\begin{figure}[t]
	\centering
    \includegraphics[width=0.5\textwidth]{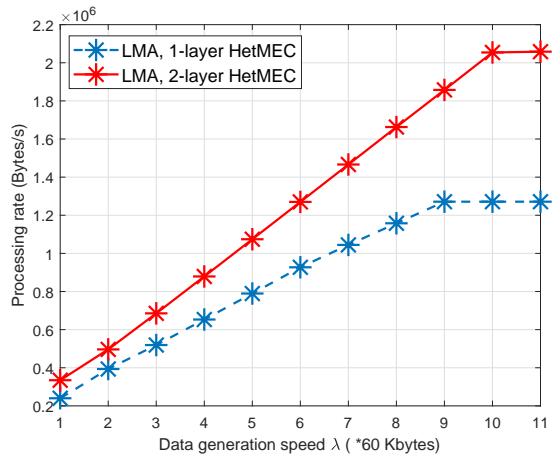}
    \caption{The processing rate vs. the data generation speed in the HetMEC network with different number of layers.}
	\label{Fig:throughput_compare}
\end{figure}

\par As shown in Fig. \ref{Fig:computational_throughput}, we analyze the processing rate given different data generation speed in both cases.
As presented in Fig. \ref{Fig:computational_throughput}(a) and (b), both in the one-layer HetMEC network and two-layer HetMEC network, the processing rate of different schemes is non-decreasing as the data generation speed increases. Other schemes, e.g., conventional MEC scheme, the cloud and local computing, reach the saturated point when the data generation speed surpasses the respective thresholds, reflecting the bottlenecks of the network processing rate utilizing these schemes. After reaching the bottleneck, the network cannot offer more computing resources for the new generated data, and the processing rate stops increasing. The LMA gains relatively high processing rate, especially when the data generation speed is large. Since the LMA jointly utilizes the computing and transmission resources of the whole HetMEC network, it achieves the highest bottleneck of the processing rate.

\par Fig. \ref{Fig:throughput_compare} compares the processing rate in the one-layer and two-layer HetMEC networks by performing LMA. When the HetMEC network is congested, no more resources can be utilized for data processing, and thus, the processing rate reaches saturation and stop grows with the data generation speed.
Compared with the one-layer HetMEC network, the processing rate of the two-layer HetMEC network is higher given the same data generation speed, especially when the data generation speed is large.
The computing and transmission resources are enriched in the two-layer HetMEC network. By jointly utilizing the computing resources of different layers and properly scheduling the data transmission among layers in the HetMEC network, more computing resources contributes to higher processing rate as the number of layers grows.


\subsubsection{Network Robustness Evaluation}

\begin{table}[t]
	\centering
	\caption{Computing capacity and transmission resource setting in different cases}
\vspace{-3mm}
	\begin{tabular}{c|c|c|c|c|c|c}
		\hline \hline
		\multirow{2}*{Node}     & \multicolumn{2}{c|}{Case 1}& \multicolumn{2}{c|}{Case 2}& \multicolumn{2}{c}{Case 3}\\
        \cline{2-7}
		~     & Compute & Transmit & Compute & Transmit & Compute & Transmit \\
		\hline \hline
		 ED       & $0.12$ Mbps &- & $0.12$ Mbps &- & $0.12$ Mbps &-\\
		\hline
		AP        & $0.4$ Mbps & $0.9$ Mbps & $0.4$ Mbps & $1.2$ Mbps & $0.4$ Mbps & $3$ Mbps\\
		\hline
		Switch     & $1.5$Mbps & $3$ Mbps & $1.5$Mbps & $3$ Mbps & $1.5$Mbps & $6$ Mbps\\
		\hline
		Network            & \multirow{2}*{$4.2$ Mbps} & \multirow{2}*{$4.8$ Mbps} & \multirow{2}*{$4.2$ Mbps} & \multirow{2}*{$4.8$ Mbps} & \multirow{2}*{$4.2$ Mbps} & \multirow{2}*{$12$ Mbps}\\
		 gateway           &~ &~ &~ &~ &~ &~ \\
		\hline
		CC          & $12$ Mbps & $12$ Mbps & $12$ Mbps & $12$ Mbps & $12$ Mbps & $15$ Mbps\\
		\hline \hline
	\end{tabular} \label{Table:case}
\end{table}

\begin{figure}[t]
	\centering
	\includegraphics[width=0.55\textwidth]{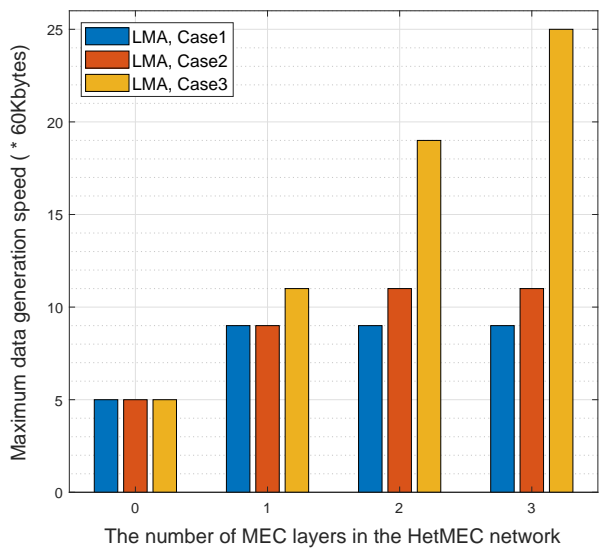}
	\caption{The network robustness with different number of MEC layers in the HetMEC network.}
	\label{Fig:bufferSize}
\end{figure}

Fig. \ref{Fig:bufferSize} shows the maximum data generation speed of the HetMEC network in the cloud-only network, one-layer, two-layer and three-layer HetMEC networks in different cases, the settings of which are presented in Table \ref{Table:case}.
The maximum data generation speed of the non-congested HetMEC network reflects the robustness of the network, as analyzed in Section IV. Case 1 and case 2 show the performance of robustness in the transmission resource shortage case of the HetMEC network, and case 3 shows the performance of robustness in the computing resource shortage case of the HetMEC network.

In case 1, as the number of the MEC layers increases, the network robustness first becomes stronger when the number of MEC layers $N\leq1$, and then remains the same when~$N\geq 1$.
\emph{When $N\leq 1$}, the main constraint of the robustness is the computing resources in the whole network. Therefore, the robustness of the HetMEC network become stronger when inducing more layers of MEC servers into the task assignment and processing.
\emph{When $N\geq1$}, it is the transmission resources between the APs and EDs that constrains the network robustness. As analyzed in Section~IV, it cannot contribute to the robustness enhancement to add a layer of MEC servers above the APs, and thus, the network robustness does not increase anymore.

In case 2, as the number of MEC layers grows, the network robustness becomes stronger when the number of MEC layers $N\leq2$, and then remains the same when $N\geq 2$. Since the transmission resources between the APs and the EDs become more abundant, the HetMEC network can process the tasks with larger data generation speed, and thus, the robustness is enhanced in the two-layer HetMEC network. However, the transmission resources between the APs and the EDs still constrain the enhancement of the robustness, and the robustness remains unchanged when $N\geq 2$.
The network robustness is improved only when the computing and transmission resources of the added layer satisfy the conditions analyzed in Section IV.

In case 3, the network robustness becomes stronger as the number of MEC layers grows. The transmission resources of all MEC layers are abundant, and the computing resources constrain the improvement of the robustness.
As analyzed in Section IV, in the computing resource shortage case, it can improve the network robustness to induce new layers of MEC servers satisfying (\ref{Computing_addcondition}) and (26).

\vspace{5mm}
\section{Conclusion \label{sec:conclusion}}%

In this paper, we have studied a HetMEC network in order to provide low-latency data services.
We have considered a typical uplink MEC application, where the raw data are generated at the EDs and the results of the data processing need to be aggregated at the CC through multiple layers of MEC servers.
The tasks are optimally divided and assigned to the nodes on multiple layers, including the CC, MEC servers and EDs.
Through jointly considering the task assignment, computing and transmission resource allocation, we have proposed the LMA for latency minimization in the HetMEC network.
Simulation results have showed that our proposed algorithm LMA can significantly reduce the system latency and increase the processing rate as well as the network robustness.
Based on both theoretical and numerical analysis, we conclude that
the relation between the network robustness and the number of layers of the HetMEC network is influenced by the amount of the computing and transmission resources.
In the computing resource shortage case, the robustness can be improved when inducing the MEC servers above any layer.
In contract, in the transmission resource shortage case, it cannot contribute to the enhancement of the robustness when inducing more layers of MEC servers above the initial transmission resource constrained layer.

\vspace{10mm}
\begin{appendices}
\section{Proof of Proposition 2} \label{app:seperation}
In the proportional optimization problem (\ref{Latency_min}), the task assignment strategy $\textbf{s}$, computing capacity allocation $\bm{\theta}$ and transmission resources allocation $\bm{\phi}$ are coupled, expressed as

\begin{align}\label{Latency_initial}
&L(\bm{s},\bm{\theta},\bm{\phi})=\frac{\lambda_0^1}{\theta_0^1} + \sum_{n=1}^{N+1}\sum_{j=1}^{M_{n-1}}\sum_{i\in\mathcal{Q}_{n-1}^{j}}\left[ \frac{s_n^i \lambda_n^i}{\theta_n^i} + \frac{\rho s_n^i \lambda_n^i + (1-s_n^i)\lambda_n^i + \beta_n^i}{\phi_{n}^{j,i}} \right]
\end{align}

However, it is worth noting that the transmission resources of each node that allocated to its child nodes are limited, as described in (\ref{NBlockCon_AP4}), and the upperbound of the computing capacity of each node is fixed.
We consider that no spare computing capacity or transmission resource is left, i.e., (\ref{NBlockCon_ED2}) and (\ref{NBlockCon_AP2}) are satisfied.
Hence, by utilizing the Cauchy-Schwarz inequality \cite{CauchyInequality}, we can obtain the following inequation.
\begin{align}\label{CauchyInequa}
L(\bm{s},\bm{\theta},\bm{\phi})\geq L_{min}(\bm{s}) = &
 \left[\frac{\lambda_0^1}{\theta_{0}^{1,u}}\!+\!\sum_{n=1}^{N+1}\!\sum_{j=1}^{M_{n-1}}\!\!
 \sum_{i\in\mathcal{Q}_{n-1}^j}\!\!\frac{s_{n}^{i} \lambda_n^i}{\theta_{n}^{i,u}}\right] \!+ \nonumber\\
  & \sum_{n=1}^{N+1}\!\sum_{j=1}^{M_{n-1}}\!\frac{\left(\sum_{i\in\mathcal{Q}_{n-1}^{j}}
  \sqrt{ \rho s_n^i \lambda_n^i \!+\! (1\!-\!s_n^i)\lambda_n^i \!+\! \beta_n^i } \right)^2}{\phi_{n-1}^j},
\end{align}
where $\theta_{n}^{i,u}$ and $\phi_{n-1}^j$ are the boundary of the computing and transmitting capacity.
The proportional optimization problem (\ref{Latency_min}) is converted into a pure task assignment problem.

\vspace{15mm}
\section{Proof of the Proposition 4}\label{app:Concave}
We analyze the network with one parent node and $M$ child nodes. Let $s_i$ and $\lambda_i$ denotes the task assignment percentage and raw data arriving rate at child node $i$. The maximum computing capacity of child node $i$ is denoted by $\theta_i^u$, and the computing capacity of the parent node is denoted by $\theta^u$. The total transmission resource of the parent node is expressed by $\phi$.

The latency $L_{min}$ can be expressed by
$$L_{min}=\sum_{i=1}^{M}\frac{s_i \lambda_i}{\theta_i^u} + \frac{\sum_{i=1}^{M}(1-s_i)\lambda_i}{\theta^u} + \frac{\left(\sum_{i=1}^M\sqrt{(1-s_i)\lambda_i + \rho s_i \lambda_i}\right)^2}{\phi}.$$
The Hessian matrix of the system latency with $M$ child nodes can be expressed by
$$\textbf{H}_M =
\begin{bmatrix}
   h_{1,1} & h_{1,2} & \dots & h_{1,M} \\
   h_{2,1} & h_{2,2} & \dots & h_{2,M} \\
   \dots & \dots &   & \dots \\
   h_{M,1} & h_{M,2} & \dots & h_{M,M}
  \end{bmatrix}
  =
 \begin{bmatrix}
   \frac{\partial^2 L_{min}}{\partial s_1 s_1} & \frac{\partial^2 L_{min}}{\partial s_1 s_2} & \dots & \frac{\partial^2 L_{min}}{\partial s_1 s_M} \\
   \frac{\partial^2 L_{min}}{\partial s_2 s_1} & \frac{\partial^2 L_{min}}{\partial s_2 s_2} & \dots & \frac{\partial^2 L_{min}}{\partial s_2 s_M} \\
   \dots & \dots &   & \dots \\
   \frac{\partial^2 L_{min}}{\partial s_M s_1} & \frac{\partial^2 L_{min}}{\partial s_M s_2} & \dots & \frac{\partial^2 L_{min}}{\partial s_M s_M}
  \end{bmatrix}
$$
We can obtain the second partial derivative as follows
\begin{align}
  h_{i,i} &= \frac{\partial^2 L_{min}}{\partial s_i^2} = -Z\lambda_i^2 \frac{(\sum_{j=1}^M A_j) - A_i}{A_i^3}, \\
  h_{i,j} &= \frac{\partial^2 L_{min}}{\partial s_i s_j} = Z\lambda_i \lambda_j \frac{1}{A_i A_j},
\end{align}
where
\begin{align}
Z &= \frac{(1-\rho)^2}{2\phi} \geq 0, \quad \\
A_i &= \sqrt{(1-s_i)\lambda_i+\rho s_i \lambda_i} \geq 0.
\end{align}
Considering a normal vector $\textbf{x}=[x_1 x_2 \dots x_M]^T$, we obtain the following polynomial
\begin{equation}
  X_M(\textbf{x}) = \textbf{x}^T\textbf{H}_M\textbf{x} = \sum_{i=1}^{M}\sum_{j=1}^{M}h_{i,j}x_i x_j.
\end{equation}
We then prove that $X_M(\textbf{x})\leq0$ for any natural number $M$ by mathematical induction.
\begin{itemize}
\item When $M=1$: $X_1(\textbf{x})=0 \leq 0$
\item When $M=2$: $X_2(\textbf{x}) = -Z\frac{(A_1^2\lambda_2 - A_2^2\lambda_2)^2}{M_1^3 M_2^3}\leq 0$.
\item We assume that when $M = m-1$, $X_{m-1}(\textbf{x})= \textbf{x}^T\textbf{H}_{m-1}\textbf{x}\leq 0$.
\item Hence, when $M = m$, we have
\begin{align*}
  X_m(\textbf{x}) =& \textbf{x}^T\textbf{H}_{m}\textbf{x} =  \sum_{i=1}^{m}\sum_{j=1}^{m}h_{i,j}x_i x_j \\
  =& X_{m-1} - \frac{Z}{A_m^3}\sum_{i-1}^{m-1}\frac{(A_m^2\lambda_i-A_i^2\lambda_m)^2}{A_i^3} \leq 0.
\end{align*}
\end{itemize}
Since $X_M(\textbf{x})\leq0$ for any natural number $M$, the Hessian matrix $H_M$ is a seminegative definite matrix, implying that the function $L_{min}$ is concave\cite{Hessian}.

\par Moreover, the non-congested constraints are linear based on the \textbf{Proposition \ref{prop:non-congested}}. Hence, the the minimum value of a concave function is obtained at the vertex of the feasible set bounded by the non-congested constraints.

\vspace{15mm}
\section{Proof of Remark 2}\label{app:complexity}
We consider the worst case, that the number of child nodes connected with each parent node is $Q=\max\limits_{0\leq n\leq N, 1\leq i\leq M_n}Q_n^i$. In this case, the number of nodes in the whole network can be calculated as below.
\begin{equation}
M = \sum_{n=0}^{N+1}Q^n=\frac{Q^{N+2}-1}{Q-1}.
\end{equation}
The complexity of the latency minimization algorithm is proportional to the number of feasible vertexes, which is proportional to the square of the number of the constraints and closely related to the complexity of finding the vertexes. The non-congested constraints derive from the computing capacity limitation of each node and the transmission resources limitation of each parent node.
The number of the computing capacity constraints equals that of all nodes:
\begin{equation}
K_c = M = \sum_{n=0}^{N+1}Q^n=\frac{Q^{N+2}-1}{Q-1}.
\end{equation}
The number of the transmission resource constraints equals the number of the parent nodes in the whole network, which can be expressed by
\begin{equation}
K_t = \sum_{n=0}^{N}Q^n=\frac{Q^{N+1}-1}{Q-1}.
\end{equation}
The number of the constraints is
\begin{equation}
K=K_c+K_t=\frac{Q^{N+2}+Q^{N+1}-2}{Q-1}.
\end{equation}
The maximum number of the vertexes of the feasible set is
\begin{align}
O(K^2)=\frac{O(Q^{2N+4})}{O(Q^2)}=O(M^2).
\end{align}

\end{appendices}

\vspace{10mm}

\end{document}